\documentclass{article} 
\usepackage{authblk}
\usepackage{hyperref}

\title{Exact Cut Complexity of Equal-Length Proportional Cake Cutting\thanks{A preliminary version of this paper appeared in the proceedings of the 51st International Symposium on Mathematical Foundations of Computer Science (MFCS 2026) \cite{kawase_sanpui_2026}.\\
E-mail addresses: {ykawase422@g.chuo-u.ac.jp} (Y. Kawase), bodhayan.roy@gmail.com (B. Roy), azharuddinsanpui123@gmail.com (M. A. Sanpui).   }}
\author[1]{Yasushi Kawase}
\author[2]{Bodhayan Roy}
\author[2]{Mohammad Azharuddin Sanpui}
\affil[1]{Chuo University,\authorcr Tokyo, Japan \authorcr \mbox{}}
\affil[2]{  Indian Institute of Technology Kharagpur,\authorcr
    Kharagpur, West Bengal,
            India}
            
\date{}

\usepackage{fullpage}
\usepackage[linesnumbered,ruled,vlined]{algorithm2e}

\SetCommentSty{mycommfont}

\usepackage{amsthm,amsmath,amssymb,mathtools}
\newtheorem{theorem}{Theorem}
\newtheorem{definition}{Definition} 
\newtheorem{proposition}{Proposition}

\newtheorem{lemma}{Lemma}
\newtheorem{corollary}{Corollary}
\usepackage{hyperref}
\usepackage[capitalise]{cleveref}
\theoremstyle{remark}
\newtheorem{remark}[theorem]{Remark}         

\crefname{hypothesis}{Hypothesis}{Hypotheses}
\crefname{fact}{Fact}{Facts}
\crefname{claim}{Claim}{Claims}
\usepackage{colortbl}
\newcolumntype{d}{>{\columncolor{gray!30}}c}
\usepackage{booktabs}

\usepackage[numbers,sort]{natbib} 
\usepackage{cleveref}
\usepackage{enumitem}
\setlist[itemize]{leftmargin=10pt}

\newcommand{\C}{[0,1]}
\newcommand{\length}[1]{\lambda\left(#1\right)}
\newcommand{\kprop}[1]{\kappa_{#1}^{\mathrm{prop}}}
\newcommand{\kef}[1]{\kappa_{#1}^{\mathrm{ef}}}
\newcommand{\kefh}[2]{\kappa_{#1,#2}^{\mathrm{ef}}}

\usepackage{tikz}
\usetikzlibrary{positioning,shapes.misc,calc}

\usepackage{todonotes}
\definecolor{myyellow}{cmyk}{0,0.02,0.23,0.01}

\begin{document}

\maketitle

\begin{abstract}
We study proportional cake cutting under an equal-length constraint, where each of $n$ agents must receive a bundle of length exactly $1/n$ that is worth at least $1/n$ to that agent.
Our main result determines the exact worst-case number of cuts needed to guarantee such an allocation, namely $2n-2$ for every $n$. This improves on the previously known quadratic upper bound.
The proof combines the constrained necklace-splitting theorem with a recursive assignment argument based on Hall's theorem.
The same bound extends to the stronger requirement of envy-freeness when $n$ is a prime power, and an $O(n^{1.525})$ upper bound holds for arbitrary $n$.
We then ask whether such allocations can be computed.
We first show that no deterministic protocol in the Robertson--Webb model terminates after finitely many queries on every instance and always returns an exact equal-length proportional allocation, even for two agents with identical valuations.
For every $\varepsilon>0$, however, $O(n^2/\varepsilon)$ queries suffice to produce an allocation in which every agent receives a bundle of length exactly $1/n$ and value at least $1/n-\varepsilon$, again using at most $2n-2$ cuts.
Finally, we augment the Robertson--Webb model with two-knife queries.
We show that recursive balanced halving attains the optimal bound of $2n-2$ cuts when $n$ is a power of two.
For arbitrary $n$, we give a procedure using $O(n\log n)$ cuts.
\end{abstract}

\textbf{Keywords}: {cake cutting, fair division, cut complexity, query complexity, proportionality, envy-freeness, Robertson--Webb model, two-knife queries}

\section{Introduction}\label{sec:introduction}
Fair division asks how to allocate divisible and indivisible resources among agents with heterogeneous preferences \cite{Steinhaus1949SurLD,Amanatidis2023FairDO,Bouveret2016FairAO}.
Cake cutting is its standard model for divisible resources.
Proportional and envy-free cake cutting have been studied extensively, with existential, algorithmic, and complexity results for many variants \cite{cseh2020complexity,Bei2012OptimalPC,aziz2016discrete,edmonds_pruhs_2011,Pikhurko2000OnEC,Su1999RentalHS}.

In many fair-division settings, the size of each share is fixed independently of the agents' preferences.
Appointment slots have prescribed durations, bandwidth blocks have fixed widths, and land plots may have standardized areas.
In each case, fairness must be achieved subject to an exogenous size constraint.

We consider the interval cake $[0,1]$, where each of the $n$ agents must receive a bundle of length exactly $1/n$.
In the terminology of fair division with market values \cite{barman_ebadian_latifian_shah_2025}, Lebesgue length is a common objective valuation.
We treat this measure as public information and the agents' valuations as private.
Proportionality is imposed with respect to the private valuations.
Thus, each agent's bundle must have private value at least $1/n$ and length exactly $1/n$.

Existence alone does not control fragmentation.
A schedule split among many short intervals can be inconvenient, and a land allotment divided into many patches can be costly to use.
On an interval, the natural measure of fragmentation is the number of cuts.
We ask for the minimum number of cuts that always suffices to give every agent a bundle of length $1/n$ and value at least $1/n$.
Without the equal-length constraint, the classical proportionality guarantee can be attained with connected pieces and $n-1$ cuts \cite{Steinhaus1949SurLD,segal_halevi_2019}.
Our main result shows that imposing equal lengths increases the exact worst-case cut complexity to $2n-2$.
Throughout this paper, a cut is a distinct boundary point retained in the final allocation and measures fragmentation.
It is not a Robertson--Webb \textsc{Cut} query, which is an information-acquisition operation counted separately in \Cref{sec:uncomputable}.

Before this work, the only upper bound for all $n$ was quadratic.
It follows from the corresponding envy-free problem with market values \cite{barman_ebadian_latifian_shah_2025}, which imposes the same constraint on the common objective measure as our setting but strengthens proportionality to envy-freeness for the private valuations.

\subsection{Our Results}\label{sec:our-results}

Let $\kprop{n}$ denote the least number of cuts that always suffice for equal-length proportional cake cutting with $n$ agents.
Our main theorem determines this number exactly
$\kprop{n}=2n-2$ for all $n\ge 1$.
An identical-valuation instance gives the lower bound (\Cref{prop:elementary-lower-bound}). 
The remainder of \Cref{sec:prop-cut} proves the matching upper bound.

To show this, we use the constrained necklace-splitting theorem of Joji\'c, Panina, and \v{Z}ivaljevi\'c \cite{jojic_panina_zivaljevic_2021}.
For a prime-power divisor $r$ of $n$, it produces $r$ equal-length top-level bundles while balancing a fractional assignment of agents to bundles.
Hall's theorem converts the fractional support into $r$ equally sized agent groups, after which we recurse within each bundle.
This gives the recurrence in \Cref{lem:prop-rec} and the exact bound in \Cref{thm:main}.

We also consider the stronger requirement of equal-length envy-freeness.
Let $\kef{n}$ denote its worst-case cut complexity.
When $n$ is a prime power, the constrained necklace-splitting theorem and the proportional lower bound give $\kef{n}=2n-2$ (\Cref{prop:prime-power-direct}).
For arbitrary $n$, a peeling argument based on the Stromquist--Woodall exact-share theorem gives $\kef{n}=O(n^{1.525})$ (\Cref{prop:ef-bhp}).
Determining $\kef{n}$ exactly for general $n$ remains open.

Existence does not imply finite query complexity.
Even with free access to length, no deterministic protocol in the Robertson--Webb model terminates after finitely many queries on every instance and always returns an exact equal-length proportional allocation.
This impossibility holds even for two agents with the same valuation (\Cref{thm:main-un}).
For every $\varepsilon>0$, however, discretization computes an equal-length $\varepsilon$-proportional allocation using $O(n^2/\varepsilon)$ queries and at most $2n-2$ cuts (\Cref{thm:rw-approx}).
A polynomial-time alternative in the real-RAM oracle model uses $O(n^3/\varepsilon)$ queries and $O(n^2/\varepsilon)$ cuts (\Cref{prop:rw-approx-polytime}).

With two-knife queries in addition to Robertson--Webb queries, equal-length proportional allocations can be produced without approximation.
When $n$ is a power of two, recursive balanced halving uses \textsc{MoveLength} queries, which move a fixed-length window until a prescribed number of agents value it at least their specified thresholds.
The procedure uses at most $n-1$ two-knife queries and the optimal $2n-2$ cuts (\Cref{thm:moving-power-two}).
For arbitrary $n$, a \textsc{MoveValue} query follows Austin's two-knife procedure by keeping one agent's value between the knives fixed and stopping when the portion between them reaches a specified length.
Combining this query with a matching-based recursion gives an equal-length proportional allocation using $O(n^5)$ Robertson--Webb queries, $O(n^2)$ two-knife queries, and $O(n\log n)$ cuts for arbitrary nonatomic valuations (\Cref{thm:rw-upper}).

\subsection{Related Work}\label{sec:related-work}

The most directly related work concerns fair division with market values.
Barman, Ebadian, Latifian, and Shah \cite{barman_ebadian_latifian_shah_2025} study fair division with both private valuations and a common market valuation.
Taking that common valuation to be Lebesgue length gives the constraint considered here.
In the cake-cutting setting, their result gives the upper bound $\kef{n}\le n(n-1)$ for the stronger requirement of envy-freeness with respect to both the private valuations and the common market valuation, and their Open Question~5 asks whether $O(n)$ cuts always suffice.
We do not settle that question.
In the equal-length setting, \Cref{sec:ef-cut} improves their quadratic bound to $O(n^{1.525})$.
Since envy-freeness implies proportionality, their quadratic bound also applied to the proportional problem, but it did not determine its exact cut complexity.

Equal-size constraints also appear for indivisible goods.
Biswas and Barman study cardinality constraints \cite{biswas_barman_2018}, while Kawase and Mahara study balanced allocations, in which all agents receive the same number of items \cite{kawase_mahara_2026}.
Related models appear in \cite{bu2023fair,golz2026fair,Kawase2025SimultaneouslyFA,la2026bad}.

Proportional cake cutting with unequal or weighted entitlements also minimizes cuts subject to private-value guarantees.
In that setting, Segal-Halevi \cite{segal_halevi_2019} proved a general $O(n\log n)$ upper bound together with a $2n-2$ lower bound, and Crew, Narayanan, and Spirkl \cite{crew_narayanan_spirkl_2020} improved the upper bound to $3n-4$.
These results do not require each allocated bundle to have length exactly $1/n$, or more generally common-measure value exactly $1/n$, and therefore do not imply the present theorem.

Exact division and continuous necklace splitting require every agent to value all final bundles equally.
This stronger requirement yields quadratic cut bounds, for example via Alon \cite{alon_1987}; see also Goldberg and Li \cite{goldberg_li_itcs2023} for exact divisions into prescribed ratios.
Exact division is stronger than equal-length proportionality, which requires only the assigned bundle to be worth at least $1/n$ to each agent.
We instead use the constrained necklace-splitting theorem of Joji\'c, Panina, and \v{Z}ivaljevi\'c \cite{jojic_panina_zivaljevic_2021}.
Its preference version simultaneously equalizes length and produces an accepted assignment.
For a prime-power number of agents, choosing preferences according to private values directly gives the optimal $2n-2$ bound.

Exact fairness can be incompatible with finite Robertson--Webb interaction.
Exact division is a classical example \cite{robertson_webb_1997}.
Procaccia and Wang~\cite{procaccia2017lower} give quantitative lower bounds for approximate equitability and show that exact equitability cannot be computed even without a uniform query bound.
Here length is known and only the private valuation is queried.
The proof in \Cref{sec:uncomputable} exploits this distinction by perturbing private masses at geometric cuts and geometric positions at value cuts.
For unconstrained proportional cake cutting, Arndt, Moseley, Im, and Pruhs~\cite{arndt_moseley_im_pruhs_2026} recently gave a deterministic algorithm using $O(n)$ \textsc{Cut} queries and $O(n^2)$ \textsc{Eval} queries.
Their result does not impose equal lengths and counts query operations rather than cuts retained in the final allocation.

Austin's two-knife procedure assumes that an agent can move two knives continuously while keeping the intervening piece at a fixed value~\cite{Austin1982}.
Our additional query uses this continuous motion and stops when the intervening piece reaches a specified length.
\section{Preliminaries}\label{sec:preliminaries}

We consider allocating the cake $\C$ among $n$ agents, indexed by $\{1,\dots,n\}$.
Each agent $i$ has a valuation $v_i$, modeled as a nonatomic Borel probability measure on $\C$.
Thus, $v_i(\C)=1$.
Let $\lambda$ denote Lebesgue measure on $[0,1]$.

An allocation is a measurable partition $(C_1,\dots,C_n)$ of $\C$.
We call it \emph{equal-length} if $\length{C_i}=1/n$ for every $i\in\{1,\dots,n\}$.
We call it \emph{proportional} if $v_i(C_i)\ge 1/n$ for every $i\in\{1,\dots,n\}$, and \emph{envy-free} if $v_i(C_j)\le v_i(C_i)$ for every $i,j\in\{1,\dots,n\}$.
Every envy-free allocation is proportional because $v_i(C_i)$ is then at least the average of $v_i(C_1),\dots,v_i(C_n)$, which is $1/n$.

Given $q$ distinct interior cut points, list them as $0<t_1<\cdots<t_q<1$, put $t_0=0$ and $t_{q+1}=1$, and let $m=q+1$.
For $q=0$, this list is empty and $m=1$.
For $j\in\{1,\dots,m\}$, define the induced pieces by
\[
  I_j\coloneqq
  \begin{cases}
    [t_0,t_1] & \text{if } j=1,\\
    (t_{j-1},t_j] & \text{if } j\ge2.
  \end{cases}
\]
Thus, every interior cut point is assigned to the piece immediately to its left.
We say that an allocation uses at most $q$ cuts if there exist at most $q$ distinct interior cut points such that each bundle is a union of the induced pieces.
In configuration spaces below, formal cut coordinates may coincide.
Such coincident coordinates create only degenerate intervals and never increase the number of distinct cuts in the original cake.
Let $\kprop{n}$ denote the least number of cuts that always suffice for equal-length proportional cake cutting with $n$ agents.
Similarly, let $\kef{n}$ denote the least number of cuts that always suffice for equal-length envy-free cake cutting with $n$ agents.

The following exact-share theorem is the main tool in the peeling argument of \Cref{sec:ef-cut}.

\begin{theorem}[Stromquist--Woodall~\cite{stromquistwoodall1985}]\label{thm:stromquist-woodall}
Let $d\ge 2$, let $\eta_1,\dots,\eta_d$ be nonatomic probability measures on $[0,1]$, and let $\alpha\in[0,1]$.
Then there exists a measurable set $A\subseteq [0,1]$ such that $\eta_j(A)=\alpha$ for every $j\in\{1,\dots,d\}$.
Moreover, $A$ can be chosen as a union of pieces induced by at most $2d-2$ cuts.
\end{theorem}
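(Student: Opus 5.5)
The plan is the classical topological proof via the Borsuk--Ulam theorem, in the form used for consensus halving and for Alon's necklace splitting. Cut the interval into $2d-1$ consecutive pieces by $2d-2$ formal cut coordinates, and let the pieces alternate between belonging to $A$ and to its complement. Parametrize such configurations by points $x=(x_1,\dots,x_{2d-1})$ of the sphere $S^{2d-2}\subset\mathbb{R}^{2d-1}$: piece $j$ has length $x_j^2$, and it lies in $A$ or in $A^c$ according to the sign of $x_j$. Pieces with $x_j=0$ are degenerate, so at most $2d-2$ distinct cuts are used. For $\alpha=1/2$, apply Borsuk--Ulam to the odd continuous map
\[
f(x)=\bigl(\eta_1(A_x)-\eta_1(A_x^c),\dots,\eta_{2d-2}(A_x)-\eta_{2d-2}(A_x^c)\bigr)
\]
on $S^{2d-2}\to\mathbb{R}^{2d-2}$. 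Since we only have $d$ measures, we may pad with copies of them. A zero of $f$ gives a set $A$ with $\eta_j(A)=1/2$ for all $j$. Continuity of $f$ follows from nonatomicity. This only handles $\alpha=1/2$, however, and wastes dimension, because $d$ constraints need only about $d-1$ cuts.

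For general $\alpha$ the plan is Stromquist--Woodall's actual route. First, reduce to an equivalent statement on the circle: identify $0$ with $1$, and seek $A$ that is a union of at most $d-1$ arcs with $\eta_j(A)=\alpha$ for all $j$. A union of $d-1$ arcs on the circle unrolls to a union of pieces of $[0,1]$ using at most $2(d-1)=2d-2$ interior cut points, since each arc contributes two endpoints and wrapping around $0$ costs nothing extra. Second, prove the circle statement by a topological degree argument. Consider the space of unions of at most $d-1$ arcs with total $\eta_1$-measure $\alpha$, parametrized by alternating gaps and arcs around the circle; it has dimension $2(d-1)-1$. The $d-1$ functions $\eta_j(A)-\eta_1(A)$ for $j\ge 2$ must then have a common zero. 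Stromquist and Woodall do this by induction on $d$. The degree (or Borsuk--Ulam-type) argument is applied to the equivariant structure given by rotating the circle: rotating $A$ continuously through a full turn interchanges the roles of its parts in a way that forces a zero.

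The main obstacle is the general-$\alpha$ case, since the antipodal symmetry exploited above is specific to $\alpha=1/2$. I would try first to avoid the full circle theorem. A direct route is to handle rational $\alpha=p/q$ by iterating: obtain $q$ equal parts by necklace splitting and then merge. That costs too many cuts, so the cut bound would fail. I would therefore instead cite the circle version, with its $\mathbb{Z}_2$/rotational degree argument, as the essential step, and derive the interval statement with at most $2d-2$ cuts by the unrolling described above. Irrational $\alpha$ is then handled either directly, since the degree argument does not care about rationality, or by a compactness limit over configurations with a bounded number of cuts, using that the configuration space is compact and the map $A\mapsto(\eta_j(A))_j$ is continuous.
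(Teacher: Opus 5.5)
Your final plan matches the paper's treatment. The paper does not reprove the theorem: it cites Stromquist--Woodall's circle version (a union of at most $d-1$ arcs) and opens the circle at the identified point $0\equiv 1$, which gives at most $2d-2$ interior cut points, with nonatomicity making the endpoint assignment irrelevant. The Borsuk--Ulam detour for $\alpha=1/2$ and the degree/dimension-count sketch are not needed, and the dimension count by itself would not force a common zero, so your argument, like the paper's, rests entirely on the cited circle theorem.
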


\begin{remark}
Stromquist and Woodall originally state the cut-count guarantee for a circle.
In our interval notation, this means viewing $[0,1]$ as a circle by identifying $0$ and $1$.
The exact-share set can then be chosen as a union of at most $d-1$ connected pieces of the circle.
Opening the circle at the identified endpoint gives a subset of $[0,1]$ with at most $2d-2$ distinct interior boundary points.
Since all measures are nonatomic, assigning the opened endpoint to either side does not change any measure.
\end{remark}
\section{The Constrained Necklace-Splitting Theorem}\label{sec:necklace}

We use an interval form of constrained necklace splitting to balance the fractional assignments in the proportional proof.
Its preference form will also give the prime-power envy-free bound in \Cref{sec:ef-cut}.

\subsection{Configuration Space and Theorem Statement}\label{sec:theorem-statement}

Fix integers $m,r\ge 1$.
Let
$\Delta_m\coloneqq\left\{(y_1,\dots,y_m)\in\mathbb R_{\ge 0}^m : \sum_{k=1}^m y_k=1 \right\}$
be the simplex of possible lengths of $m$ consecutive subintervals.
For $y\in\Delta_m$, put $x_0(y)=0$ and $x_k(y)=\sum_{\ell=1}^k y_\ell$ for $k\in\{1,\dots,m\}$.
These cut positions form $m$ possibly degenerate consecutive intervals.
An \emph{$(m,r)$-configuration} is an equivalence class $\xi=[(y,f)]$, where $y\in\Delta_m$ and $f\colon\{1,\dots,m\}\to\{1,\dots,r\}$ labels the intervals; two pairs are identified when they differ only on indices $k$ with $y_k=0$.
Let $\mathcal C_{m,r}$ be this quotient space.
It is a compact metric space, being a finite simplicial quotient of $\Delta_m\times\{1,\dots,r\}^{\{1,\dots,m\}}$.

For a label $i$, let $V_i(\xi)$ be the union of the intervals labeled $i$, understood up to boundary points.
The bundles $V_1(\xi),\dots,V_r(\xi)$ partition the cake and together use at most $m-1$ cuts.
The symmetric group $S_r$ acts by relabeling
\[
  \sigma\cdot[(y,f)]=[(y,\sigma\circ f)]
  \quad\text{and}\quad
  V_{\sigma(i)}(\sigma\cdot\xi)=V_i(\xi)
\]
up to boundaries.

\begin{definition}
A \emph{preference matrix} on $\mathcal C_{m,r}$ is a family $\{A_i^j\}_{i,j\in\{1,\dots,r\}}$ of subsets of $\mathcal C_{m,r}$, where $A_i^j$ is the set of configurations at which agent $j$ accepts label $i$.
It is \emph{closed} if every $A_i^j$ is closed, \emph{proper} if $\bigcup_{i=1}^r A_i^j=\mathcal{C}_{m,r}$ for every agent $j$, and \emph{$S_r$-equivariant} if $\sigma\cdot A_i^j=A_{\sigma(i)}^j$ for every $\sigma\in S_r$ and every $i,j\in\{1,\dots,r\}$, where $\sigma\cdot A_i^j\coloneqq\{\sigma\cdot\xi:\xi\in A_i^j\}$.
\end{definition}

Thus $A_i^j$ records when agent $j$ accepts bundle $i$; closedness preserves acceptance under limits, properness guarantees an accepted label, and equivariance makes acceptance independent of label names.

The following result is Theorem~6.9 of Joji\'c, Panina, and \v{Z}ivaljevi\'c~\cite{jojic_panina_zivaljevic_2021}, translated to our relabeling convention.
On an interval, nonatomicity gives the required continuity.
For a prime power $r$, the theorem partitions the cake into $r$ labeled bundles, equalizes $t$ measures across them, and realizes an accepted assignment with $(r-1)(t+1)$ cuts.

\begin{theorem}[Theorem~6.9 of \cite{jojic_panina_zivaljevic_2021}]\label{thm:jpz69}
Let $r\ge 2$ be a prime power, let $t\ge 1$, and put $q\coloneqq (r-1)(t+1)$ and $m\coloneqq q+1$.
Let $\mu_1,\dots,\mu_t$ be nonatomic probability measures on $[0,1]$, and let $\{A^j_i\}_{i,j\in\{1,\dots,r\}}$ be a closed, $S_r$-equivariant, proper preference matrix on $\mathcal C_{m,r}$.
Then there exist a configuration $\xi\in \mathcal C_{m,r}$ and a permutation $\pi\in S_r$ such that $\xi\in A^j_{\pi(j)}$ for every $j\in\{1,\dots,r\}$, and $\mu_\tau(V_i(\xi))=1/r$ for every $\tau\in\{1,\dots,t\}$ and every $i\in\{1,\dots,r\}$.
In particular, the resulting partition uses at most $q=(r-1)(t+1)$ cuts.
\end{theorem}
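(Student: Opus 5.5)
The plan is the configuration-space/test-map scheme, with the preference matrix absorbed as one extra copy of the standard representation. That extra copy accounts for the "$+1$" in $q=(r-1)(t+1)$.

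\emph{Topology and group action.} Write $r=p^k$ and let $G\cong(\mathbb Z_p)^k\subseteq S_r$ act regularly on the label set $\{1,\dots,r\}$. The space $\mathcal C_{m,r}$ is homeomorphic to the $m$-fold join $[r]^{*m}$ of an $r$-point discrete set, so it is $(m-2)$-connected, that is, $(q-1)$-connected. The $G$-action is free. Indeed, if $g\neq e$ fixed $\xi=[(y,f)]$, then $g\circ f=f$ on the indices with $y_k>0$, and there is at least one such index. This is impossible because $g$ fixes no label.

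Let $W_r=\{z\in\mathbb R^r:\sum_i z_i=0\}$ with $S_r$ permuting coordinates. Then $W_r^{\oplus(t+1)}$ has dimension $(r-1)(t+1)=q$ and contains no nonzero $G$-fixed vector. By Volovikov's theorem, every $G$-equivariant map
\[
  \mathcal C_{m,r}\to W_r^{\oplus(t+1)}
\]
has a zero.

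\emph{Test map.} The first $t$ components are
\[
  F_\tau(\xi)=\bigl(\mu_\tau(V_i(\xi))-1/r\bigr)_{i=1}^r .
\]
These are continuous by nonatomicity and equivariant by the relabeling rule.

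For the last component, fix $\varepsilon>0$ and let $U_i^j$ be the open $\varepsilon$-neighbourhood of $A_i^j$. For each agent $j$, the family $\{U_i^j\}_i$ covers $\mathcal C_{m,r}$ by properness. Choose a partition of unity $\{\psi_i^j\}_i$ subordinate to it. Make it $S_r$-equivariant, $\psi_i^j(\sigma^{-1}\xi)=\psi_{\sigma(i)}^j(\xi)$, by averaging over $S_r$. This is possible because equivariance of the $A_i^j$ (with an invariant metric) makes the cover equivariant. Set
\[
  H(\xi)_i=\sum_{j=1}^r\psi_i^j(\xi)-1 ,
\]
which lies in $W_r$ since each agent's row sums to $1$.

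A zero of $(F_1,\dots,F_t,H)$ gives a configuration where every $\mu_\tau$ is equalized across the $r$ bundles. At that configuration the matrix $(\psi_i^j)$ is doubly stochastic. By Birkhoff--von Neumann, some permutation $\pi$ has $\psi^j_{\pi(j)}(\xi)>0$ for all $j$, so $\xi\in U^j_{\pi(j)}$ for every $j$.

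\emph{Limit.} Take $\varepsilon\to0$. There are finitely many permutations, and $\mathcal C_{m,r}$ is compact. So we can pass to a subsequence with a constant $\pi$ and configurations $\xi_\varepsilon\to\xi$. Continuity of the $F_\tau$ preserves the equalities $\mu_\tau(V_i(\xi))=1/r$. Closedness of the $A_i^j$ gives $\xi\in A^j_{\pi(j)}$ for every $j$. The cut count $m-1=q$ is automatic from the configuration space.

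I expect the main obstacle to be the equivariant partition of unity. One has to check that averaging over $S_r$ keeps the supports inside $U_i^j$ and respects the convention $V_{\sigma(i)}(\sigma\cdot\xi)=V_i(\xi)$. If this becomes delicate, my fallback is to drop $S_r$ and work only with $G$-equivariance, since Volovikov's theorem needs nothing more.
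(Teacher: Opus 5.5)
Your argument is correct, but the paper gives no proof of this statement. It imports Theorem~6.9 of Joji\'c--Panina--\v{Z}ivaljevi\'c, only translating it to the relabeling convention $V_{\sigma(i)}(\sigma\cdot\xi)=V_i(\xi)$ and noting that nonatomicity supplies the needed continuity.

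You instead reconstruct the result from scratch by the configuration-space/test-map method, and all of your steps check out:
\begin{itemize}
\item $\mathcal C_{m,r}\cong[r]^{*m}$ is $(q-1)$-connected and carries a free action of $G\cong(\mathbb Z_p)^k$.
\item The unit sphere of $W_r^{\oplus(t+1)}$ has no $G$-fixed points, because $G$ is transitive on labels.
\item Volovikov's theorem therefore forces a zero of $(F_1,\dots,F_t,H)$, at which $(\psi_i^j)$ is doubly stochastic.
\item Compactness together with closedness of the $A_i^j$ removes $\varepsilon$.
\end{itemize}

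The step you flagged is not delicate. With an $S_r$-invariant metric (average any compatible metric over the finite group), $\sigma\cdot U_i^j=U_{\sigma(i)}^j$. So averaging keeps supports inside $U_i^j$ and preserves the row sums. More simply still, the normalized functions $\max\{0,\varepsilon-d(\xi,A_i^j)\}$ are already equivariant, and their sum over $i$ is at least $\varepsilon$ by properness.

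The citation buys brevity. Your route buys transparency. It shows exactly where the prime-power hypothesis enters, namely Volovikov's theorem for $(\mathbb Z_p)^k$. It shows that the single extra summand $W_r$ carrying the preferences produces the $+1$ in $q=(r-1)(t+1)$. It also shows that only $G$-equivariance of the preference matrix is needed for this statement. Full $S_r$-equivariance is used only later, in \Cref{cor:jpz-diagonal}, to move $\pi$ to the identity.
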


\begin{corollary}\label{cor:jpz-diagonal}
Under the assumptions of \Cref{thm:jpz69}, there exists a configuration $\xi\in \mathcal C_{m,r}$ such that $\xi\in A^j_j$ for every $j\in\{1,\dots,r\}$, and $\mu_\tau(V_i(\xi))=1/r$ for every $\tau\in\{1,\dots,t\}$ and every $i\in\{1,\dots,r\}$.
The resulting partition uses at most $q=(r-1)(t+1)$ cuts.
\end{corollary}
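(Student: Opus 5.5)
Given $\xi$ and $\pi$ from \Cref{thm:jpz69}, the plan is to relabel the configuration by $\pi$, turning the permuted assignment into the diagonal one. Put $\xi'\coloneqq\pi\cdot\xi$. The relabeling action changes only the labeling function $f$, replacing it by $\pi\circ f$. It leaves the length vector $y$ unchanged, so the cut positions are the same. Hence $\xi'$ induces the same collection of bundles, only with permuted labels, and it uses the same set of at most $q=(r-1)(t+1)$ distinct cuts.

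First, I check acceptance. Equivariance states that $\sigma\cdot A_i^j=A_{\sigma(i)}^j$ for all $\sigma,i,j$. I would like to take $\sigma=\pi$ and $i=\pi(j)$, but that gives $\xi'\in A^j_{\pi(\pi(j))}$, which is the wrong label. The correct choice is to relabel by the inverse, $\xi'\coloneqq\pi^{-1}\cdot\xi$. From $\xi\in A^j_{\pi(j)}$ we then get
\[
\xi'=\pi^{-1}\cdot\xi\in\pi^{-1}\cdot A^j_{\pi(j)}=A^j_{\pi^{-1}(\pi(j))}=A^j_j
\]
for every $j$.

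Second, I check the balance condition. The relation $V_{\sigma(i)}(\sigma\cdot\xi)=V_i(\xi)$ with $\sigma=\pi^{-1}$ says that the bundles of $\xi'$ are exactly those of $\xi$ with permuted labels. In particular, $\{V_i(\xi')\}_{i}=\{V_i(\xi)\}_i$ as unordered families, up to boundary points. Since each $\mu_\tau$ is nonatomic, boundary points carry no mass. Therefore $\mu_\tau(V_i(\xi'))=1/r$ for all $\tau$ and $i$, because every bundle of $\xi$ has this property. The cut count is unchanged because $y$ is unchanged.

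The only delicate point is the bookkeeping of the direction of the relabeling, namely choosing $\pi^{-1}$ rather than $\pi$. Everything else follows immediately from the definitions.
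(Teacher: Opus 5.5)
Your proposal is correct and matches the paper's proof: relabel by $\pi^{-1}$ and use equivariance to send $A^j_{\pi(j)}$ to $A^j_j$; since relabeling leaves the length vector $y$ unchanged, it preserves every measure equality and the cut count. In a final write-up, define $\xi'\coloneqq\pi^{-1}\cdot\xi$ from the start rather than first writing $\pi\cdot\xi$ and correcting it afterwards.
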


\begin{proof}
Let $(\xi,\pi)$ be given by \Cref{thm:jpz69} and set $\xi'=\pi^{-1}\cdot\xi$.
Equivariance sends $A^j_{\pi(j)}$ to $A^j_j$.
Thus, $\xi'\in A^j_j$ for every $j$.
Relabeling does not change any measure equality or the cut count.
\end{proof}

This diagonal form assigns each agent an accepted label while balancing every auxiliary measure.
We also need the following continuity fact.

\begin{lemma}\label{lem:bundle-value-cont}
Let $\nu$ be a nonatomic finite measure on $[0,1]$, and fix a label $i\in\{1,\dots,r\}$.
Then the map $\xi\mapsto \nu(V_i(\xi))$ is continuous on $\mathcal C_{m,r}$.
\end{lemma}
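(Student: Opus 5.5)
We prove continuity directly from the quotient structure, working first with representatives and then descending to $\mathcal C_{m,r}$. Define $\Phi\colon \Delta_m\times\{1,\dots,r\}^{\{1,\dots,m\}}\to\mathbb R$ by
\[
  \Phi(y,f)\coloneqq\sum_{k\,:\,f(k)=i}\nu\bigl((x_{k-1}(y),x_k(y)]\bigr).
\]
This is the $\nu$-value of the union of intervals labeled $i$. Because $\nu$ is nonatomic, endpoints do not matter, so $\Phi(y,f)=\nu(V_i([(y,f)]))$ regardless of how boundary points are assigned.

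\textbf{Step 1: $\Phi$ is well defined on the quotient.} Suppose two pairs $(y,f)$ and $(y,f')$ differ only at indices $k$ with $y_k=0$. The corresponding intervals are degenerate: they are empty or a single point. Each therefore has $\nu$-measure zero by nonatomicity, so relabeling them does not change $\Phi$. Hence $\Phi$ descends to a function $\bar\Phi$ on $\mathcal C_{m,r}$ that agrees with $\xi\mapsto\nu(V_i(\xi))$.

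\textbf{Step 2: $\Phi$ is continuous on each sheet.} For a fixed labeling $f$, note that $\nu((a,b])=F(b)-F(a)$, where $F(t)=\nu([0,t])$ is the distribution function of $\nu$. Nonatomicity makes $F$ continuous on $[0,1]$. Each $x_k(y)$ is a linear function of $y$, so $y\mapsto\Phi(y,f)$ is a finite sum of continuous functions and hence continuous. The domain is a finite disjoint union of copies of $\Delta_m$, one for each $f$, so $\Phi$ is continuous on the whole product.

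\textbf{Step 3: descend to the quotient.} The space $\mathcal C_{m,r}$ carries the quotient topology from the compact space $\Delta_m\times\{1,\dots,r\}^{\{1,\dots,m\}}$. Step 1 shows that $\Phi$ is constant on equivalence classes, so by the universal property of the quotient topology $\bar\Phi$ is continuous.

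The main obstacle is Step 3. If the compact metric on $\mathcal C_{m,r}$ were given some other way, we would need it to agree with the quotient topology. This follows because a continuous bijection from a compact space onto a Hausdorff space is a homeomorphism. Alternatively, we can argue sequentially. Take $\xi_n\to\xi$ and pick representatives $(y^n,f^n)$. Only finitely many labelings exist, so after passing to a subsequence we may assume $f^n=f$ is constant and, by compactness, that $y^n\to y$. Then $(y,f)$ represents $\xi$, and Step 2 gives $\bar\Phi(\xi_n)\to\bar\Phi(\xi)$ along this subsequence. Every subsequence has a further subsequence with this limit, so the whole sequence converges to $\bar\Phi(\xi)$.
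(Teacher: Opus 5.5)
Your proof is correct and follows essentially the same route as the paper: you prove continuity on each fixed-label sheet via the continuous distribution function $F_\nu$, observe that the choice of representative is irrelevant because degenerate pieces are $\nu$-null, and then descend to $\mathcal C_{m,r}$. The only difference is in that last step, where the paper glues the finitely many sheets with the pasting lemma and you use the universal property of the quotient topology, with a sequential argument as a fallback; this is, if anything, a slightly cleaner way to justify the descent.
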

\begin{proof}
Fix a label function $f\colon\{1,\dots,m\}\to\{1,\dots,r\}$.
On the fixed-label subset represented by $(y,f)$, let $x_0(y)=0$ and $x_k(y)=\sum_{\ell=1}^k y_\ell$.
Under the convention of \Cref{sec:preliminaries}, the first piece is $[x_0(y),x_1(y)]$ and the $k$th piece is $(x_{k-1}(y),x_k(y)]$ for $k\ge2$.
Hence $\nu(V_i(\xi))$ is a finite sum of differences $F_\nu(x_k(y))-F_\nu(x_{k-1}(y))$, where $F_\nu(x)=\nu([0,x])$.
Since $\nu$ is nonatomic, $F_\nu$ is continuous, and hence the map is continuous on each fixed-label subset.

If two representatives differ only by relabeling degenerate pieces, then the corresponding bundle unions differ only by sets of $\nu$-measure zero.
Thus the subsetwise formulas agree on overlaps and descend to the quotient $\mathcal C_{m,r}$.
Since the quotient is obtained by gluing finitely many fixed-label subsets, the pasting lemma gives continuity on $\mathcal C_{m,r}$.
\end{proof}

\subsection{Assignment Equalization}\label{sec:assignment-weight-equalization}

The transfer argument needs an equal-length configuration that equalizes the coordinates of a continuous equivariant map.
The following direct consequence of \Cref{cor:jpz-diagonal} supplies it; later, $F_i$ is the total fractional weight assigned to bundle $i$.

\begin{lemma}\label{lem:assignment-weight-equalization}
Let $r\ge 2$ be a prime power.
Suppose $F=(F_1,\dots,F_r)\colon \mathcal C_{2r-1,r}\to\mathbb R^r$ is continuous and satisfies $F_{\sigma(i)}(\sigma\cdot\xi)=F_i(\xi)$ for every $\sigma\in S_r$ and $i\in\{1,\dots,r\}$.
Then there exists $\xi\in\mathcal C_{2r-1,r}$ such that $\length{V_1(\xi)}=\cdots=\length{V_r(\xi)}=1/r$, and $F_1(\xi)=\cdots=F_r(\xi)$.
\end{lemma}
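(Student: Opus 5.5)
We will derive the lemma from \Cref{cor:jpz-diagonal} with $t=1$ and $\mu_1=\lambda$. Then $q=(r-1)(t+1)=2r-2$ and $m=2r-1$, matching the space $\mathcal C_{2r-1,r}$. The corollary forces $\length{V_i(\xi)}=1/r$ for every $i$, so all that remains is to encode the equality $F_1(\xi)=\cdots=F_r(\xi)$ in a closed, proper, $S_r$-equivariant preference matrix.

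We will make every agent $j$ accept the same sets, namely those where the label carries the maximal coordinate:
\[
  A_i^j\coloneqq\{\xi\in\mathcal C_{2r-1,r}: F_i(\xi)\ge F_k(\xi)\text{ for all }k\}.
\]
Each condition is needed as follows.
\begin{itemize}
\item \emph{Closed:} each $A_i^j$ is a finite intersection of sets $\{F_i-F_k\ge 0\}$, which are closed by continuity of $F$.
\item \emph{Proper:} some coordinate attains the maximum at every $\xi$, so $\bigcup_i A_i^j=\mathcal C_{2r-1,r}$.
\item \emph{Equivariant:} if $\xi\in A_i^j$, then $F_{\sigma(i)}(\sigma\cdot\xi)=F_i(\xi)\ge F_k(\xi)=F_{\sigma(k)}(\sigma\cdot\xi)$ for all $k$. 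Since $\sigma$ is a bijection, this says $\sigma\cdot\xi\in A_{\sigma(i)}^j$, so $\sigma\cdot A_i^j\subseteq A_{\sigma(i)}^j$. Applying the same argument with $\sigma^{-1}$ gives the reverse inclusion.
\end{itemize}

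\Cref{cor:jpz-diagonal} now gives $\xi$ with $\xi\in A_j^j$ for every $j$ and equal lengths. Membership $\xi\in A_j^j$ means $F_j(\xi)=\max_k F_k(\xi)$ for each $j$. Hence every coordinate equals the maximum, and $F_1(\xi)=\cdots=F_r(\xi)$.

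There is no serious obstacle. The step needing the most care is the equivariance check: the action on configurations and on labels must be composed consistently with the convention $V_{\sigma(i)}(\sigma\cdot\xi)=V_i(\xi)$, and the hypothesis on $F$ is exactly what makes the argmax sets transform correctly. One also has to confirm that the diagonal form applies with all agents sharing identical acceptance sets, which it does because the corollary places no distinctness requirement on agents.
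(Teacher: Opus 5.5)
Your proposal is correct and follows the paper's proof essentially verbatim. The paper also gives every agent the argmax sets of $F$ as acceptance sets, checks closedness, properness and equivariance, and applies \Cref{cor:jpz-diagonal} with the single measure $\lambda$ to conclude that every coordinate is maximal; your two-sided equivariance check via $\sigma^{-1}$ just spells out what the paper states in one line.
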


\begin{proof}
For each label $i\in\{1,\dots,r\}$, define
\[\textstyle
  A_i \coloneqq \left\{
    \xi\in\mathcal C_{2r-1,r} :
    F_i(\xi)\ge \max_{i'\in\{1,\dots,r\}}F_{i'}(\xi)
  \right\}.
\]
Let every agent have the same preference family, i.e., $A_i^j\coloneqq A_i$ for all $i,j\in\{1,\dots,r\}$.
Because $F$ is continuous, each set $A_i$ is closed.
Because at least one entry of $F(\xi)$ is maximal at every $\xi$, the matrix $\{A_i^j\}_{i,j\in\{1,\dots,r\}}$ is proper.
Because $F_{\sigma(i)}(\sigma\cdot\xi)=F_i(\xi)$, relabeling labels by any permutation $\sigma\in S_r$ sends maximizing labels at a configuration to maximizing labels at the relabeled configuration.
Thus the preference matrix is $S_r$-equivariant.

Apply \Cref{cor:jpz-diagonal} with the single measure $\lambda$ and with the preference matrix defined above.
We obtain a configuration $\xi\in\mathcal C_{2r-1,r}$ such that $\length{V_1(\xi)}=\cdots=\length{V_r(\xi)}=1/r$, and $\xi\in A_j^j=A_j$ for every $j\in\{1,\dots,r\}$.
Thus every label corresponds to a largest entry of $(F_1(\xi),\dots,F_r(\xi))$, and therefore $F_1(\xi)=\cdots=F_r(\xi)$.
\end{proof}
\section{Exact Cut Complexity for Proportionality}\label{sec:prop-cut}

We prove $\kprop{n}=2n-2$ for every $n$.
After an elementary lower bound, we write $n=rs$ with $r$ a prime power, partition the agents into $r$ balanced groups, and recurse within the corresponding top-level bundles.
The lower-bound instance is analogous to one used in the market-value setting \cite{barman_ebadian_latifian_shah_2025}.

\begin{proposition}\label{prop:elementary-lower-bound}
For every $n\ge 1$, we have $\kprop{n} \ge 2n-2$.
\end{proposition}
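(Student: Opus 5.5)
We will exhibit a single instance in which all $n$ agents have the same valuation $v$, and show that every equal-length proportional allocation for it uses at least $2n-2$ cuts. Identical valuations force exactness: if $v(C_i)\ge 1/n$ for all $i$ and $\sum_i v(C_i)=1$, then $v(C_i)=1/n$ for every $i$. Each bundle must therefore have length exactly $1/n$ and value exactly $1/n$. We will choose $v$ so that these two requirements cannot both be met by a single interval, which forces every bundle to consist of at least two pieces.

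\textbf{Choice of $v$.} Let $v$ be uniform on $n$ tiny disjoint windows of common width $\delta\ll 1/n$, centred at the points $(k-\tfrac12)/n$ for $k=1,\dots,n$. Give each window mass $1/n$, so that $v$ is nonatomic. This places most of the mass near the centres of the length-$1/n$ cells. Any interval of length exactly $1/n$ meets at most two windows. It can have value exactly $1/n$ in two ways: it covers one window completely, or it covers complementary portions of two adjacent windows. Neither possibility is excluded, so this $v$ does not give the bound directly.

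\textbf{Refinement.} To obstruct connected exact pieces, we instead use unequal masses. Arrange windows so that every interval of length $1/n$ has value strictly different from $1/n$, except perhaps at the two ends. One natural candidate is a density that is strictly increasing in a suitably twisted way. A cleaner option is a density concentrated near the two endpoints $0$ and $1$ with mass $1/2$ each; this is the market-value instance. Then a bundle of length $1/n$ can have value $1/n$ only if it takes roughly a $2/n$ fraction of the mass near one end, or mass from both ends. With point-like concentrations, the required shape is a small piece at an end together with filler from the zero-mass middle. We then count. Let $b$ be the number of bundles that are single intervals. A single interval touching neither end has value $0$, and at most two single intervals can touch the ends. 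The remaining $n-b$ bundles are unions of at least two intervals. Counting interval pieces, there are at least $b+2(n-b)\ge 2n-2$ pieces when $b\le 2$, which gives at least $2n-3$ cuts. To reach $2n-2$, we must still rule out the case $b=2$. A single interval starting at $0$ of length $1/n$ has value $1/2>1/n$ when the mass is concentrated within $[0,1/n)$, a contradiction for $n\ge3$. Hence no bundle is a single interval and every bundle has at least two pieces. This gives at least $2n$ pieces, and hence at least $2n-1$ cuts in general, which is more than we need.

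\textbf{Checking the counting and the case $n\le 2$.} The bound $2n$ pieces looks too strong, so the counting needs care. Pieces of a bundle that are adjacent merge, so in the pieces count the endpoints $0$ and $1$ may belong to the same bundle. The exact accounting is that $q$ cuts yield $q+1$ pieces. If every bundle needs two pieces, then $q+1\ge 2n$. One bundle might wrap around through the pieces containing $0$ and $1$ and still count as two. The expected answer is $2n-2$ cuts, which is $2n-1$ pieces, so one bundle is allowed to be a single piece. That bundle is the middle-type single interval $[a,a+1/n]$ straddling nothing. With the endpoint concentration such an interval has value $0$, so the contradiction above stands, and the instance must be tuned so that only one bundle can be connected. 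The main obstacle is this precise tuning of $v$ so that the combinatorial count gives exactly $2n-2$ and not an accidental higher or lower number. Before the final write-up, we will verify the chosen instance on $n=1$, where $0$ cuts are needed, and on $n=2$, where $2$ cuts are needed.
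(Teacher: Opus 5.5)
There is a genuine gap. You never fix a valid instance, and the two-endpoint instance you settle on does not force $2n-2$ cuts.

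The failing step is the claim that a single interval touching neither end has value $0$. That holds only for point masses, which nonatomicity excludes. Suppose $v$ has density $1/(2\delta)$ on $[0,\delta]\cup[1-\delta,1]$. An interval straddling $\delta$ that contains exactly $2\delta/n$ of $[0,\delta]$ then has length and value exactly $1/n$. For $n=4$, put $m_k=\delta+k(1-2\delta)/4$. The bundles $(\delta/2,m_1]$, $(m_3,1-\delta/2]$, $[0,\delta/2]\cup(m_1,m_2]$ and $(m_2,m_3]\cup(1-\delta/2,1]$ each have length and value $1/4$, and they use only $5=2n-3$ cuts. For $n=2$, the single cut at $1/2$ already works. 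So your first count, $2n-3$, is the truth for this instance.

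Your intermediate conclusion of at least $2n-1$ cuts should have been a red flag, since it contradicts $\kprop{n}\le 2n-2$. In fact your opening goal, that no single interval have both length and value $1/n$, is unattainable for any nonatomic $v$. Slide a window of length $1/n$ from a cell $[k/n,(k+1)/n]$ of value at least $1/n$ to one of value at most $1/n$; by continuity it passes through value exactly $1/n$.

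Your final reformulation, that at most one bundle can be connected, is the right target. What is missing is an instance that achieves it. The paper gives every agent the valuation uniform on the single interval $[0,1/n]$. Identical valuations force $v(C_i)=1/n$ for every $i$. Hence $\lambda(C_i\cap[0,1/n])=1/n^2$ and $\lambda(C_i\cap(1/n,1])=1/n-1/n^2>0$ for $n\ge 2$.

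The paper then counts cuts by region rather than components by bundle. The cuts inside $(0,1/n)$ split it into subintervals, each lying in one bundle. All $n$ bundles must appear there, so $(0,1/n)$ contains at least $n-1$ cuts. The same argument applies to $(1/n,1)$, for a total of $2n-2$. Equivalently, a connected bundle must contain the point $1/n$, so at most one bundle is connected.

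With mass at both ends, the valuable region consists of two intervals, which need only $n-2$ interior cuts to serve $n$ bundles. That is exactly the one cut your instance loses.
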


\begin{proof}
Fix $n\ge 1$.
The case $n=1$ is trivial.
We may therefore assume $n\ge 2$.
Consider the profile in which every agent $i$ has the same valuation $v_i(C')=n\cdot\length{C'\cap [0,1/n]}$ for $C'\subseteq \C$.
Let $(C_1,\dots,C_n)$ be any equal-length proportional allocation.
Since every agent values only the prefix $[0,1/n]$, each bundle $C_i$ must satisfy $\length{C_i\cap[0,1/n]}\ge 1/n^2$.
Summing over $i$ shows that equality must hold for every $i$, namely $\length{C_i\cap[0,1/n]}=1/n^2$ and $\length{C_i\cap(1/n,1]}=1/n-1/n^2$.
Therefore every bundle has positive length inside $[0,1/n]$ and positive length inside $(1/n,1]$.

Let $T$ be the set of distinct interior cut points used by the allocation.
The points of $T\cap(0,1/n)$ split $[0,1/n]$ into $|T\cap(0,1/n)|+1$ positive-length intervals, each contained in a single induced piece and therefore assigned to a single bundle.
Since every one of the $n$ bundles has positive length in $[0,1/n]$, these intervals must serve all $n$ bundles, and hence $|T\cap(0,1/n)|+1\ge n$.
Thus at least $n-1$ cuts lie in $(0,1/n)$.
This argument does not depend on whether $1/n$ itself is a cut point.
If an induced piece crosses $1/n$, its left part still belongs to only the one bundle containing that induced piece.

The same reasoning applied to $(1/n,1]$ gives $|T\cap(1/n,1)|\ge n-1$.
The two open intervals are disjoint.
Hence every such allocation uses at least $2n-2$ distinct interior cuts.
\end{proof}

\subsection{The Transfer Principle}\label{sec:transfer-principle}

The first step rounds a fractional assignment with balanced row and column sums, using Hall's marriage theorem, into an assignment of exactly $s$ agents to each of the $r$ labels.

\begin{lemma}\label{lem:balanced-support}
Let $M=(M_{ij})$ be an $r\times n$ nonnegative matrix such that every row sums to $1$ and every column sums to $1/s$, where $n=rs$ and $s$ is a positive integer.
Then there exists a map $\pi\colon\{1,\dots,n\}\to\{1,\dots,r\}$ with the following properties, where $\pi^{-1}(i)$ denotes the set of elements mapped to $i$.
\begin{enumerate}
  \item $|\pi^{-1}(i)| = s$ for every $i\in\{1,\dots,r\}$, and
  \item $M_{\pi(j),j}>0$ for every $j\in\{1,\dots,n\}$.
\end{enumerate}
\end{lemma}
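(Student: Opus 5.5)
The plan is to reduce the statement to a perfect matching in a bipartite graph and verify Hall's condition using the row and column sums. Build a bipartite graph $G$. Its left side is the set of agents $\{1,\dots,n\}$. Its right side consists of $n$ slots $(i,k)$ with $i\in\{1,\dots,r\}$ and $k\in\{1,\dots,s\}$, so that each label $i$ is represented by $s$ copies. Join agent $j$ to every slot $(i,k)$ with $M_{ij}>0$. A perfect matching of $G$ then defines $\pi(j)$ as the label of the slot matched to $j$. Each label has exactly $s$ slots and every slot is used, so $|\pi^{-1}(i)|=s$, and $M_{\pi(j),j}>0$ holds by the definition of the edges. Both properties of the statement therefore follow once a perfect matching exists.

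To get the matching, I would verify Hall's condition for every set $J\subseteq\{1,\dots,n\}$ of agents. Let $N(J)\subseteq\{1,\dots,r\}$ be the set of labels $i$ with $M_{ij}>0$ for some $j\in J$. The neighborhood of $J$ in $G$ is then all slots of labels in $N(J)$, which has size $s|N(J)|$. Double-count the mass of the submatrix with rows $N(J)$ and columns $J$. Every positive entry in a column $j\in J$ lies in a row of $N(J)$, so
\[
  \frac{|J|}{s}=\sum_{j\in J}\sum_{i=1}^r M_{ij}=\sum_{j\in J}\sum_{i\in N(J)}M_{ij}\le\sum_{i\in N(J)}\sum_{j=1}^n M_{ij}=|N(J)|.
\]
Hence $s|N(J)|\ge |J|$, which is exactly Hall's condition.

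Since both sides of $G$ have $n=rs$ vertices, Hall's theorem yields a perfect matching, and the construction above finishes the proof. There is no real obstacle here; the only point needing care is the double-counting inequality, which drops the mass of rows $N(J)$ in columns outside $J$ and uses nonnegativity of $M$.
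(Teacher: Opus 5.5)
Your proposal is correct and matches the paper's proof: the paper builds the same bipartite graph with $s$ copies of each row index (only with the two sides named in the opposite order) and verifies Hall's condition by the same column-sum double-counting inequality, $|S|/s\le|T|$. Taking $\pi(j)$ to be the row index of the slot matched to $j$ then gives both properties exactly as you describe.
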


\begin{proof}
Consider the bipartite graph with left vertices $\{1,\dots,r\}\times\{1,\dots,s\}$ and right vertices $\{1,\dots,n\}$, where $(i,\ell)$ is adjacent to $j$ if and only if $M_{ij}>0$.
We first show that this graph has a perfect matching.
The two sides both have $n=rs$ vertices.
Hence a matching covering the right side is perfect, and by Hall's theorem it is enough to show that every subset $S\subseteq\{1,\dots,n\}$ of right vertices has at least $|S|$ neighbors.

Let $T\subseteq\{1,\dots,r\}$ be the set of row indices that appear among the neighbors of $S$.
Then all positive entries in columns indexed by $S$ lie in rows indexed by $T$.
Summing the column sums over $S$ gives
\begin{align*}
  \frac{|S|}{s}
  &= \sum_{j\in S}\sum_{i=1}^r M_{ij}
   = \sum_{j\in S}\sum_{i\in T} M_{ij}
   \le \sum_{i\in T}\sum_{j=1}^n M_{ij}
   = |T|.
\end{align*}
Therefore, the neighborhood of $S$ in $\{1,\dots,r\}\times\{1,\dots,s\}$ has size $s|T|\ge |S|$.
Hall's theorem gives a perfect matching.

If column $j$ is matched to $(i,\ell)$, define $\pi(j)\coloneqq i$.
Because there are exactly $s$ copies of each row index $i$, we have $|\pi^{-1}(i)|=s$.
Because edges occur only where $M_{ij}>0$, we also have $M_{\pi(j),j}>0$ for every $j$.
\end{proof}

We use the following fractional assignments in the transfer.
The entry $M^\varepsilon_{ij}(\xi)$ is the weight that agent $j$ places on label $i$; positive weight is restricted to bundles worth more than $1/r-\varepsilon$.
The slack makes the support condition continuous and will vanish in a compactness argument.

\begin{definition}\label{def:admissible-support}
Let $n=rs$, where $s$ is a positive integer, and fix $\varepsilon>0$.
An \emph{$\varepsilon$-admissible fractional assignment scheme} for an $n$-agent profile is a continuous map $M^\varepsilon\colon\mathcal C_{2r-1,r}\to \mathbb R_{\ge 0}^{r\times n}$ with the following properties.
\begin{enumerate}
  \item For every configuration $\xi$ and every agent $j\in\{1,\dots,n\}$, the column sum satisfies $\sum_{i=1}^r M^\varepsilon_{ij}(\xi)=1/s$.
  \item The vector
        $F^\varepsilon(\xi)\coloneqq\left(\sum_{j=1}^nM^\varepsilon_{1j}(\xi),\dots,\sum_{j=1}^n M^\varepsilon_{rj}(\xi)\right)$
        satisfies
        $F^\varepsilon_{\sigma(i)}(\sigma\cdot\xi)    =F^\varepsilon_i(\xi)$
        for every $\sigma\in S_r$ and $i\in\{1,\dots,r\}$.
  \item Whenever $M^\varepsilon_{ij}(\xi)>0$, we have $v_j(V_i(\xi))>1/r-\varepsilon$.
\end{enumerate}
\end{definition}

The transfer lemma first equalizes the row sums and then applies Hall's theorem to the positive support.
A compactness argument removes the slack $\varepsilon$ while keeping the discrete assignment fixed along a subsequence.

\begin{lemma}\label{lem:support-transfer}
Let $n=rs$, where $r\ge 2$ is a prime power and $s$ is a positive integer, and let $v_1,\dots,v_n$ be an $n$-agent profile.
Suppose that for every $\varepsilon>0$ this profile admits an $\varepsilon$-admissible fractional assignment scheme.
Then this profile admits a partition of $\C$ into $r$ bundles $C_1,\dots,C_r$ with at most $2r-2$ cuts together with a map $\pi\colon\{1,\dots,n\}\to\{1,\dots,r\}$ such that
\begin{enumerate}
  \item $\length{C_1}=\cdots=\length{C_r}=1/r$,
  \item $|\pi^{-1}(i)| = s$ for all $i\in\{1,\dots,r\}$, and
  \item $v_j(C_{\pi(j)})\ge 1/r$ for every agent $j\in\{1,\dots,n\}$.
\end{enumerate}
\end{lemma}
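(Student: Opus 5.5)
The plan follows the sketch announced before the statement. First equalize row sums for each fixed $\varepsilon$, then round the support with Hall's theorem, and finally let $\varepsilon\to 0$ along a subsequence. Fix $\varepsilon=1/k$ for $k=1,2,\dots$ and let $M^{(k)}\coloneqq M^{1/k}$ be the given admissible scheme. Its row-sum vector $F^{(k)}\colon\mathcal C_{2r-1,r}\to\mathbb R^r$ is continuous, since each entry of $M^{(k)}$ is continuous. By property~2 of \Cref{def:admissible-support} it satisfies the equivariance hypothesis of \Cref{lem:assignment-weight-equalization}. That lemma gives a configuration $\xi_k\in\mathcal C_{2r-1,r}$ with $\length{V_i(\xi_k)}=1/r$ for all $i$ and $F^{(k)}_1(\xi_k)=\cdots=F^{(k)}_r(\xi_k)$.

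Next I compute the common row sum. Summing all entries of $M^{(k)}(\xi_k)$ by columns gives $n\cdot(1/s)=r$, so each of the $r$ rows sums to exactly $1$. The matrix $M^{(k)}(\xi_k)$ therefore satisfies the hypotheses of \Cref{lem:balanced-support}. This yields $\pi_k\colon\{1,\dots,n\}\to\{1,\dots,r\}$ with $|\pi_k^{-1}(i)|=s$ and $M^{(k)}_{\pi_k(j),j}(\xi_k)>0$. Property~3 of \Cref{def:admissible-support} then gives $v_j(V_{\pi_k(j)}(\xi_k))>1/r-1/k$ for every $j$. The configuration $\xi_k$ uses at most $2r-2$ cuts, since $m=2r-1$.

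The last step is the limit, which I expect to be the only delicate point. There are finitely many maps $\pi$, so I pass to a subsequence on which $\pi_k=\pi$ is constant. Since $\mathcal C_{2r-1,r}$ is a compact metric space, I pass to a further subsequence with $\xi_k\to\xi$. By \Cref{lem:bundle-value-cont}, applied with $\nu=\lambda$ and with $\nu=v_j$, the maps $\xi\mapsto\length{V_i(\xi)}$ and $\xi\mapsto v_j(V_i(\xi))$ are continuous. Taking limits gives $\length{V_i(\xi)}=1/r$ for every $i$ and $v_j(V_{\pi(j)}(\xi))\ge 1/r$ for every $j$. The cardinality condition $|\pi^{-1}(i)|=s$ is unaffected by the limit.

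Finally I set $C_i\coloneqq V_i(\xi)$, taking a representative $(y,f)$ of $\xi$ and using the interval convention of \Cref{sec:preliminaries}. These bundles form a partition of $\C$ into unions of induced pieces. It uses at most $2r-2$ distinct interior cuts, because coinciding coordinates only merge cut points. Any boundary ambiguity is a finite set, so it has measure zero under $\lambda$ and every $v_j$, by nonatomicity.
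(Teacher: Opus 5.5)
Your proposal is correct and follows the paper's proof essentially step for step. You equalize the row sums via \Cref{lem:assignment-weight-equalization}, deduce from the total $n/s=r$ that each row sums to $1$, round with \Cref{lem:balanced-support}, and remove the slack by passing to a subsequence with a fixed $\pi$ and a convergent $\xi_k$, using compactness of $\mathcal C_{2r-1,r}$ and \Cref{lem:bundle-value-cont}.
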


\begin{proof}
Fix $\varepsilon>0$, and choose an $\varepsilon$-admissible fractional assignment scheme $M^\varepsilon$.
Let $F^\varepsilon$ be its vector of total assignment weights.
Since $M^\varepsilon$ is continuous, $F^\varepsilon$ is continuous as well.
By \Cref{lem:assignment-weight-equalization}, there exists $\xi_\varepsilon\in\mathcal C_{2r-1,r}$ such that $\length{V_1(\xi_\varepsilon)}=\cdots=\length{V_r(\xi_\varepsilon)}=1/r$ and $F^\varepsilon_1(\xi_\varepsilon)=\cdots=F^\varepsilon_r(\xi_\varepsilon)$.

The sum of all entries of $M^\varepsilon(\xi_\varepsilon)$ is $n/s=r$.
Since all $r$ row sums are equal, each equals $1$.

Applying \Cref{lem:balanced-support} to $M^\varepsilon(\xi_\varepsilon)$, we obtain a map $\pi_\varepsilon\colon\{1,\dots,n\}\to\{1,\dots,r\}$ such that each label has exactly $s$ preimages and $M^\varepsilon_{\pi_\varepsilon(j),j}(\xi_\varepsilon)>0$ $(j\in\{1,\dots,n\})$.
By admissibility, $v_j(V_{\pi_\varepsilon(j)}(\xi_\varepsilon))>1/r-\varepsilon$.

Let $\Pi$ be the finite set of maps $\pi\colon\{1,\dots,n\}\to\{1,\dots,r\}$ that assign $s$ agents to each label.
Choose a sequence $\varepsilon_k>0$ with $\varepsilon_k\to0$, apply the preceding argument to each $\varepsilon_k$, and denote the resulting pair by $(\xi_k,\pi_k)$.
Since $\Pi$ is finite, pass to a subsequence with $\pi_k=\pi$ for all $k$.
Since $\mathcal C_{2r-1,r}$ is a compact metric space, pass to a further subsequence with $\xi_k\to\xi$.

Set $C_i\coloneqq V_i(\xi)$.
These sets partition $\C$ up to boundary points, which we assign according to the half-open convention of \Cref{sec:preliminaries}.
Nonatomicity ensures that this choice does not affect any length or value.
By \Cref{lem:bundle-value-cont}, $\xi\mapsto v_j(V_{\pi(j)}(\xi))$ is continuous for each $j$.
Since $v_j(V_{\pi(j)}(\xi_k))>1/r-\varepsilon_k$ and $\varepsilon_k\to0$, we obtain $v_j(C_{\pi(j)})\ge1/r$.
The same continuity lemma applied to $\lambda$ gives $\length{C_i}=1/r$ $(i\in\{1,\dots,r\})$.
Since $\xi\in\mathcal C_{2r-1,r}$, it has at most $2r-2$ formal interior cut positions.
Repeated positions create only degenerate intervals.
Thus, the partition uses at most $2r-2$ distinct cuts.
\end{proof}

For proportionality, an admissible scheme is obtained by normalizing each agent's positive excesses above $1/r-\varepsilon$.
This assigns positive weight only to sufficiently valuable bundles.

\begin{lemma}\label{lem:prop-support-scheme}
Let $n=rs$, where $r\ge 2$ is a prime power and $s$ is a positive integer, and let $v_1,\dots,v_n$ be an $n$-agent profile.
Then for every $\varepsilon>0$, this profile admits an $\varepsilon$-admissible fractional assignment scheme.
\end{lemma}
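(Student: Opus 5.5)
We construct the scheme explicitly, following the hint before the statement: normalize each agent's positive excesses above $1/r-\varepsilon$. For a configuration $\xi\in\mathcal C_{2r-1,r}$, an agent $j$ and a label $i$, set
\[
  w_{ij}(\xi)\coloneqq \max\bigl\{v_j(V_i(\xi))-(1/r-\varepsilon),\,0\bigr\},
  \qquad
  M^\varepsilon_{ij}(\xi)\coloneqq \frac{1}{s}\cdot\frac{w_{ij}(\xi)}{\sum_{i'=1}^r w_{i'j}(\xi)} .
\]
We then verify the three conditions of \Cref{def:admissible-support} and continuity, in that order.

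\emph{Well-definedness and continuity.} The bundles $V_1(\xi),\dots,V_r(\xi)$ partition the cake up to boundary points, and $v_j$ is a nonatomic probability measure. Hence $\sum_i v_j(V_i(\xi))=1$, so some label has $v_j(V_i(\xi))\ge 1/r>1/r-\varepsilon$. That label gives $w_{ij}(\xi)\ge\varepsilon>0$, so the denominator is at least $\varepsilon$ everywhere. By \Cref{lem:bundle-value-cont}, each map $\xi\mapsto v_j(V_i(\xi))$ is continuous on $\mathcal C_{2r-1,r}$. Taking the maximum with $0$ preserves continuity, and so does dividing by a continuous function bounded below by $\varepsilon$. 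Therefore $M^\varepsilon$ is continuous and takes values in $\mathbb R_{\ge0}^{r\times n}$. The uniform lower bound on the denominator is the only point requiring care, and it is exactly why the slack $\varepsilon$ is introduced.

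\emph{Conditions 1 and 3.} Column $j$ sums to $1/s$ by construction, which is condition~1. If $M^\varepsilon_{ij}(\xi)>0$, then $w_{ij}(\xi)>0$, which means $v_j(V_i(\xi))>1/r-\varepsilon$. This is condition~3.

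\emph{Condition 2 (equivariance).} By the relabeling rule, $V_{\sigma(i)}(\sigma\cdot\xi)=V_i(\xi)$ up to boundaries, and nonatomicity gives $v_j(V_{\sigma(i)}(\sigma\cdot\xi))=v_j(V_i(\xi))$. Hence $w_{\sigma(i)j}(\sigma\cdot\xi)=w_{ij}(\xi)$. The denominator $\sum_{i'}w_{i'j}$ is a sum over all labels, so it is invariant under $\sigma$. It follows that $M^\varepsilon_{\sigma(i)j}(\sigma\cdot\xi)=M^\varepsilon_{ij}(\xi)$ entrywise. Summing over $j$ gives $F^\varepsilon_{\sigma(i)}(\sigma\cdot\xi)=F^\varepsilon_i(\xi)$, which completes the verification.
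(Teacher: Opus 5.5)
Your proposal is correct and matches the paper's proof: the paper uses the same truncated excesses $g_{ij}(\xi)=\max\{0,\,v_j(V_i(\xi))-(1/r-\varepsilon)\}$, normalized by their column sum and scaled by $r/n=1/s$. It verifies continuity, the column sums, the support condition and equivariance by the same arguments you give.
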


\begin{proof}
Fix $\varepsilon>0$.
For each configuration $\xi\in\mathcal C_{2r-1,r}$, bundle index $i\in\{1,\dots,r\}$, and agent $j\in\{1,\dots,n\}$, define $g_{ij}(\xi)\coloneqq \max\{0,\ v_j(V_i(\xi))-(1/r-\varepsilon)\}$.
The maps $\xi\mapsto v_j(V_i(\xi))$ are continuous on $\mathcal C_{2r-1,r}$ by \Cref{lem:bundle-value-cont}.
Hence each $g_{ij}$ is continuous.
For every fixed $j$, the bundle values sum to $1$.
Hence at least one bundle has value at least $1/r$, and therefore $d_j(\xi)\coloneqq\sum_{k=1}^r g_{kj}(\xi)$ is continuous and strictly positive for every $\xi$.

Now define $M^\varepsilon_{ij}(\xi)\coloneqq (r/n)\,g_{ij}(\xi)/d_j(\xi)$.

This is a well-defined continuous nonnegative map.
For each $j\in\{1,\dots,n\}$ we have $\sum_{i=1}^r M^\varepsilon_{ij}(\xi)=r/n=1/s$.
If $M^\varepsilon_{ij}(\xi)>0$, then $g_{ij}(\xi)>0$, hence $v_j(V_i(\xi))>1/r-\varepsilon$.
Finally, relabeling sends row $i$ of $M^\varepsilon(\xi)$ to row $\sigma(i)$ of $M^\varepsilon(\sigma\cdot\xi)$ because $g_{\sigma(i),j}(\sigma\cdot\xi)=g_{ij}(\xi)$, and therefore $d_j(\sigma\cdot\xi)=d_j(\xi)$ and $M^\varepsilon_{\sigma(i),j}(\sigma\cdot\xi)=M^\varepsilon_{ij}(\xi)$.
Summing over $j$ gives $F^\varepsilon_{\sigma(i)}(\sigma\cdot\xi)=F^\varepsilon_i(\xi)$.
\end{proof}

Combining \Cref{lem:prop-support-scheme,lem:support-transfer} gives $r$ equal-length top-level bundles with at most $2r-2$ cuts, exactly $s$ assigned agents per bundle, and value at least $1/r$ for every assigned agent.

\subsection{Recurrence and Induction}\label{sec:recurrence-induction}

Each top-level bundle may be disconnected.
Concatenating its components lets us solve a normalized problem on $[0,1]$ and pull the solution back without adding cuts at the component boundaries.

Let $C\subseteq[0,1]$ be a positive-length finite union of intervals whose interior component boundaries have already been counted as cuts.
Concatenate its components from left to right and rescale by $\length{C}$ to obtain $\phi\colon C\to[0,1]$.
For every finite measure $\nu$ with $\nu(C)>0$, put $\widehat\nu(Y)=\nu(\phi^{-1}(Y))/\nu(C)$; a measure vanishing on $C$ imposes no constraint on the subproblem.

\begin{lemma}\label{lem:concatenation-pullback}
If the normalized subproblem has a solution using at most $k$ cuts, then the pulled-back solution uses at most $k$ new cuts in the original cake inside $C$.
Lengths are multiplied by $\length{C}$, and values for each measure $\nu$ with $\nu(C)>0$ are multiplied by $\nu(C)$.
Thus equal-length conditions, value inequalities, envy comparisons, and equality constraints in the normalized solution become the corresponding conditions inside $C$.
\end{lemma}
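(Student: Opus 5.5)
The plan is to make the concatenation map $\phi$ explicit and then read off each claim from two facts: $\phi$ is order-preserving, and it scales Lebesgue measure uniformly. Write $C$ as the union of its components $J_1,\dots,J_p$, listed left to right, with lengths $\ell_1,\dots,\ell_p$ and $L=\length{C}=\sum_k \ell_k$. Boundary points are irrelevant because all measures are nonatomic. On $J_k$ define $\phi$ as the translation sending $J_k$ onto $[(\ell_1+\cdots+\ell_{k-1})/L,\,(\ell_1+\cdots+\ell_k)/L]$, followed by scaling by $1/L$. Then $\phi$ is an order-preserving bijection off a finite set. It is piecewise affine with slope $1/L$ on each component. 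Consequently $\length{\phi^{-1}(Y)}=L\cdot\length{Y}$ for every measurable $Y\subseteq[0,1]$.

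\textbf{Cut count.} Suppose the normalized solution uses interior cut points $0<u_1<\cdots<u_k<1$, so each normalized bundle is a union of the induced pieces. Each pulled-back bundle $\phi^{-1}(Y)$ is then a union of sets of the form $\phi^{-1}(\text{piece})$. Each such set is a finite union of intervals inside $C$. The boundary points of these intervals come from only two sources: the preimages $\phi^{-1}(u_1),\dots,\phi^{-1}(u_k)$, which give at most $k$ points, and the boundaries of the components $J_k$, which by hypothesis are already counted as cuts. A cut $u$ that lands exactly at a junction between two components adds no new point at all. Hence at most $k$ new distinct cuts are created inside $C$, and any coincidences only reduce this number. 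The one point needing care is that $\phi^{-1}(u_i)$ may be ambiguous when $u_i$ is a junction value; taking either endpoint gives an already-counted point, so the bound holds.

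\textbf{Scaling of values.} By the definition $\widehat\nu(Y)=\nu(\phi^{-1}(Y))/\nu(C)$, the value of a pulled-back bundle is $\nu(\phi^{-1}(Y))=\nu(C)\,\widehat\nu(Y)$. Lengths scale by $L$, as computed above. Because all lengths share the factor $L$ and all values of a given measure share the factor $\nu(C)>0$, every condition of the listed kinds is preserved:
\begin{itemize}
\item equalities between lengths;
\item inequalities of the form $\widehat\nu(Y)\ge c$, which become $\nu(\phi^{-1}(Y))\ge c\,\nu(C)$;
\item envy comparisons $\widehat\nu(Y)\le\widehat\nu(Y')$;
\item equality constraints $\widehat\nu(Y)=\widehat\nu(Y')$.
\end{itemize}
Each becomes the corresponding condition with respect to $\nu$ restricted to $C$.

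Measures with $\nu(C)=0$ impose no constraint, since every subset of $C$ has $\nu$-value $0$. Finally, $\widehat\nu$ is nonatomic, because $\phi^{-1}$ of a point is a finite set, so the earlier theorems apply to the normalized subproblem. The cut-count bookkeeping is the only step requiring attention; the rest is direct.
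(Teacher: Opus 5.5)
Your proof is correct and follows the same route as the paper: each normalized cut pulls back either to a new point inside a component of $C$ or to an already-counted component boundary, and the two identities $\lambda(\phi^{-1}(Y))=\length{C}\,\lambda(Y)$ and $\nu(\phi^{-1}(Y))=\nu(C)\,\widehat\nu(Y)$, with positive scale factors, transfer every listed condition. You make $\phi$ explicit where the paper leaves it implicit; two cosmetic slips to fix are that $k$ is reused for both the component index and the cut count, and that the map on $J_k$ is a translation onto $[\ell_1+\cdots+\ell_{k-1},\,\ell_1+\cdots+\ell_k]$ followed by scaling by $1/L$, not a translation onto the already rescaled interval.
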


\begin{proof}
Each normalized cut pulls back either to one new point inside a component of $C$ or to an already counted component boundary.
Hence, at most $k$ new cuts are created.
Moreover, we have
\[
  \lambda(\phi^{-1}(Y))=\length{C}\cdot\lambda(Y)
  \quad\text{and}\quad
  \nu(\phi^{-1}(Y))=\nu(C)\cdot\widehat\nu(Y).
\]
The two positive scale factors give all claimed transfers.
Endpoint choices are immaterial by nonatomicity.
\end{proof}

Applying the top-level partition and assignment together with this lemma gives the recurrence.

\begin{lemma}\label{lem:prop-rec}
Let $n=rs$, where $r\ge 2$ is a prime power and $s$ is a positive integer.
Then $\kprop{n}\le (2r-2)+r\kprop{s}$.
\end{lemma}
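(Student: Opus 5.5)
To prove the recurrence, fix an arbitrary $n$-agent profile $v_1,\dots,v_n$ and write $n=rs$ with $r\ge2$ a prime power. The first step is the top level. \Cref{lem:prop-support-scheme} gives an $\varepsilon$-admissible fractional assignment scheme for every $\varepsilon>0$. Feeding this into \Cref{lem:support-transfer} produces bundles $C_1,\dots,C_r$ with the following properties:
\begin{itemize}
\item they partition $\C$ using at most $2r-2$ cuts;
\item each has length exactly $1/r$;
\item there is a map $\pi$ with $|\pi^{-1}(i)|=s$ and $v_j(C_{\pi(j)})\ge 1/r$ for every agent $j$.
\end{itemize}
The cuts at the boundaries of the components of the $C_i$ are exactly these at most $2r-2$ cuts, so they are now counted.

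The second step is to recurse inside each bundle. Fix $i$ and let $G_i=\pi^{-1}(i)$, a group of $s$ agents. Since $C_i$ has positive length, the concatenation map $\phi$ of \Cref{lem:concatenation-pullback} normalizes it to $[0,1]$. For each $j\in G_i$ we have $v_j(C_i)\ge 1/r>0$, so the normalized valuation $\widehat v_j$ is a nonatomic probability measure on $[0,1]$. By the definition of $\kprop{s}$, the $s$-agent profile $(\widehat v_j)_{j\in G_i}$ has an equal-length proportional allocation using at most $\kprop{s}$ cuts. When $s=1$ this is trivial, since $\kprop{1}=0$ and the whole cake goes to the single agent.

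By \Cref{lem:concatenation-pullback}, this allocation pulls back to a partition of $C_i$ among the agents of $G_i$ that adds at most $\kprop{s}$ new cuts. Each agent $j\in G_i$ then receives a set $D_j\subseteq C_i$ with
\[
\length{D_j}=\length{C_i}\cdot\frac1s=\frac1{rs}=\frac1n
\quad\text{and}\quad
v_j(D_j)\ge v_j(C_i)\cdot\frac1s\ge\frac1{rs}=\frac1n .
\]

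Finally, collect the pieces $D_j$ over all $i$ and all $j\in G_i$. The sets $G_i$ partition the agents and the $C_i$ partition the cake, so $(D_1,\dots,D_n)$ is an equal-length proportional allocation. Its cuts are the top-level cuts together with the new cuts inside each $C_i$, at most $(2r-2)+r\kprop{s}$ distinct interior cuts in total. Since the profile was arbitrary, $\kprop{n}\le(2r-2)+r\kprop{s}$.

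The one point that needs care is the cut accounting. A cut of a subproblem may pull back to an existing component boundary of $C_i$, and cuts from different bundles could in principle coincide. Both situations only reduce the number of distinct cuts, so the bound still holds. \Cref{lem:concatenation-pullback} already shows that each normalized cut contributes at most one new point. I therefore expect no real obstacle beyond this bookkeeping, because all the topological work sits in \Cref{lem:support-transfer}.
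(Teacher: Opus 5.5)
Your proposal is correct and follows essentially the same route as the paper's proof: you obtain the top-level split from \Cref{lem:prop-support-scheme} and \Cref{lem:support-transfer}, then normalize each group's valuations on its bundle and recurse via \Cref{lem:concatenation-pullback}. The cut count of $(2r-2)+r\kprop{s}$ follows exactly as in the paper.
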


\begin{proof}
Fix an arbitrary $n$-agent profile.
By \Cref{lem:prop-support-scheme}, it has an admissible scheme for every $\varepsilon>0$.
Applying \Cref{lem:support-transfer}, we obtain bundles $C_1,\dots,C_r$ of length $1/r$ and agent groups $S_i\coloneqq\pi^{-1}(i)$ $(i\in\{1,\dots,r\})$, each of size $s$, such that every $j\in S_i$ satisfies $v_j(C_i)\ge 1/r$.

Fix $i\in\{1,\dots,r\}$.
Restrict and normalize each $v_j$ for $j\in S_i$ on $C_i$; this is valid because $v_j(C_i)\ge1/r>0$.
By the definition of $\kprop{s}$ and \Cref{lem:concatenation-pullback}, the normalized instance has a pulled-back solution using at most $\kprop{s}$ new cuts, with bundle length $1/n$.
Each assigned agent $j\in S_i$ receives value at least $v_j(C_i)/s\ge (1/r)/s=1/n$.

Applying this independently inside each bundle gives an equal-length proportional allocation of the whole cake.
The top-level partition contributes at most $2r-2$ cuts, and the recursive cuts add across the $r$ disjoint bundles.
Hence the total number of cuts is at most $(2r-2)+r\kprop{s}$.
\end{proof}

\begin{theorem}\label{thm:main}
For every $n\ge 1$, we have $\kprop{n} = 2n-2$.
\end{theorem}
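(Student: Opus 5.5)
The lower bound $\kprop{n}\ge 2n-2$ is \Cref{prop:elementary-lower-bound}, so only the upper bound $\kprop{n}\le 2n-2$ remains. I would prove it by strong induction on $n$, using the recurrence of \Cref{lem:prop-rec}.

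For the base case $n=1$, the whole cake goes to the single agent. This uses zero cuts and gives length $1$ and value $1$, so $\kprop{1}=0=2\cdot1-2$.

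For the inductive step, let $n\ge2$ and assume $\kprop{s}\le 2s-2$ for every $s<n$. Choose any prime $p$ dividing $n$ and set $r\coloneqq p$, which is a prime power with $r\ge 2$. Write $n=rs$ with $1\le s<n$. By \Cref{lem:prop-rec} and the induction hypothesis,
\[
  \kprop{n}\le (2r-2)+r\kprop{s}\le (2r-2)+r(2s-2)=2rs-2=2n-2.
\]
Taking $r=n$ when $n$ itself is a prime power also works, since $s=1$ and $\kprop{1}=0$ give the bound $2n-2$ directly.

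Combining this upper bound with \Cref{prop:elementary-lower-bound} gives $\kprop{n}=2n-2$. The substantive work is already contained in \Cref{lem:support-transfer}, \Cref{lem:prop-support-scheme} and \Cref{lem:concatenation-pullback}, so no step here should present an obstacle.

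The one point to confirm is that \Cref{lem:prop-rec} applies to an arbitrary $n$-agent profile, including profiles produced by normalizing valuations on subsets. It does, because normalized restrictions of nonatomic measures are again nonatomic Borel probability measures on $[0,1]$. It follows that $\kprop{s}$ is a valid bound for every such normalized subinstance.
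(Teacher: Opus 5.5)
Your proposal is correct and matches the paper's proof: strong induction on $n$ using the recurrence $\kprop{n}\le(2r-2)+r\kprop{s}$ from \Cref{lem:prop-rec} for a prime-power divisor $r>1$ of $n$ (a prime factor works), combined with the lower bound of \Cref{prop:elementary-lower-bound}. Your remark that normalized restrictions of the valuations are again valid nonatomic profiles is the check \Cref{lem:prop-rec} relies on, so nothing is missing.
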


\begin{proof}
We prove the upper bound $\kprop{n}\le 2n-2$ by strong induction on $n$.

The case $n=1$ is trivial.
Now let $n\ge 2$, and assume for every $t<n$ that $\kprop{t}\le 2t-2$.

Choose any prime-power divisor $r>1$ of $n$, and write $n=rs$.
By \Cref{lem:prop-rec} and the induction hypothesis applied to $s<n$, we obtain $\kprop{n}\le (2r-2)+r\kprop{s}\le (2r-2)+r(2s-2)=2rs-2=2n-2$.
This proves $\kprop{n}\le 2n-2$ for every $n$.
Together with \Cref{prop:elementary-lower-bound}, this gives $\kprop{n}=2n-2$.
\end{proof}
\section{Cut Complexity for Envy-Freeness}\label{sec:ef-cut}

Since envy-freeness implies proportionality, \Cref{prop:elementary-lower-bound} gives $\kef{n}\ge2n-2$.
We match this bound when $n$ is a prime power and prove $\kef{n}=O(n^{1.525})$ in general.
For the general upper bound, we repeatedly apply the Stromquist--Woodall exact-share theorem (\Cref{thm:stromquist-woodall}), treating valuations assigned to the other side of a recursive split as auxiliary measures.
These measures control envy across the split.

\subsection{Prime Powers}\label{sec:prime-power-consequence}

For fixed $m$ and valuations $v_1,\dots,v_r$, define a preference matrix on $\mathcal C_{m,r}$ by
\[\textstyle
  A_i^j
  \coloneqq
  \left\{
    \xi\in\mathcal C_{m,r} :
    v_j(V_i(\xi))\ge \max_{i'\in\{1,\dots,r\}}v_j(V_{i'}(\xi))
  \right\}.
\]
The matrix is closed by \Cref{lem:bundle-value-cont}, equivariant under relabeling, and proper because every agent has a maximizing bundle.
Hence \Cref{cor:jpz-diagonal} applies.

\begin{proposition}\label{prop:prime-power-direct}
If $r\ge2$ is a prime power, then $\kef{r}=2r-2$.
\end{proposition}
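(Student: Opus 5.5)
The lower bound $\kef{r}\ge 2r-2$ comes directly from \Cref{prop:elementary-lower-bound}. Every envy-free allocation is proportional, so any instance that forces $2r-2$ cuts for equal-length proportionality also forces them for equal-length envy-freeness. It therefore remains to show $\kef{r}\le 2r-2$ for a prime power $r\ge 2$.

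For the upper bound I will apply \Cref{cor:jpz-diagonal} with the following data.
\begin{itemize}
\item A single auxiliary measure $\mu_1=\lambda$, so $t=1$ and $q=(r-1)\cdot 2=2r-2$, with configuration space $\mathcal C_{2r-1,r}$.
\item The preference matrix $A_i^j$ defined just above the proposition, with $m=2r-1$ and the given valuations $v_1,\dots,v_r$.
\end{itemize}
The $r$ agents play the role of the $r$ "agents" $j$ in the theorem.

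The hypotheses of the corollary need to be checked.
\begin{itemize}
\item Closedness follows from \Cref{lem:bundle-value-cont}: $\xi\mapsto v_j(V_i(\xi))$ is continuous, so the set where it attains the maximum of finitely many continuous functions is closed.
\item Properness holds because some label always attains the maximum.
\item Equivariance follows from the relabeling identity $V_{\sigma(i)}(\sigma\cdot\xi)=V_i(\xi)$. It gives $v_j(V_{\sigma(i)}(\sigma\cdot\xi))=v_j(V_i(\xi))$ for all $i$, so $i$ maximizes at $\xi$ iff $\sigma(i)$ maximizes at $\sigma\cdot\xi$. Hence $\sigma\cdot A_i^j=A_{\sigma(i)}^j$.
\end{itemize}

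The corollary then yields $\xi\in\mathcal C_{2r-1,r}$ with $\xi\in A_j^j$ for every $j$ and $\lambda(V_i(\xi))=1/r$ for all $i$. Set $C_j\coloneqq V_j(\xi)$, resolving boundary points by the half-open convention of \Cref{sec:preliminaries}. This is harmless because all measures are nonatomic.

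This allocation has the required properties.
\begin{itemize}
\item It is equal-length, since each $C_j$ has length $1/r$.
\item It is envy-free, since $\xi\in A_j^j$ means $v_j(C_j)\ge v_j(C_{i'})$ for every $i'$.
\item It uses at most $2r-2$ distinct cuts, because coincident formal cut coordinates only produce degenerate intervals.
\end{itemize}
So $\kef{r}\le 2r-2$. There is no real obstacle here; the only care needed is the equivariance check and the remark that degenerate pieces do not inflate the cut count.
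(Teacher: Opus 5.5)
Your proof is correct and follows the paper's argument. You apply \Cref{cor:jpz-diagonal} with the single measure $\lambda$ and the maximizing-bundle preference matrix to get an equal-length envy-free allocation with at most $2r-2$ cuts, then obtain the matching lower bound from the fact that envy-freeness implies proportionality. The only cosmetic difference is that you cite \Cref{prop:elementary-lower-bound} directly where the paper cites \Cref{thm:main}; both yield $\kef{r}\ge\kprop{r}\ge 2r-2$.
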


\begin{proof}
Apply \Cref{cor:jpz-diagonal} with $t=1$, measure $\mu_1=\lambda$, and the preference matrix above.
The resulting bundles have length $1/r$, and agent $j$ weakly prefers its assigned bundle $V_j$ to every $V_i$.
Thus the allocation is envy-free and uses at most $2r-2$ cuts.
Conversely, envy-freeness implies proportionality.
Thus, \Cref{thm:main} gives $\kef{r}\ge\kprop{r}=2r-2$.
\end{proof}

\subsection{The Exact-Share Recurrence}

For $n$ agents and $h$ additional nonatomic probability measures $\nu_1,\dots,\nu_h$, let $\kefh{n}{h}$ be the least number of cuts that always suffices for an equal-length envy-free allocation in which each $\nu_\ell$ assigns the same value to all $n$ bundles.
The additional measures are auxiliary: they impose equality constraints but receive no bundles.
Thus $\kef{n}=\kefh{n}{0}$, and $\kefh{1}{h}=0$ because the unique bundle is the whole cake.

Exact-share splitting yields the following recurrence.
We apply \Cref{thm:stromquist-woodall} to length, the auxiliary measures, and all agent valuations, obtaining a region to which every measure assigns the prescribed share.

\begin{theorem}\label{thm:ef-additive}
For every $n\ge 2$, every $h\ge 0$, and every $a\in\{1,\dots,n-1\}$,
\[
  \kefh{n}{h}\le 2(n+h)+\kefh{a}{h+n-a}+\kefh{n-a}{h+a}.
\]
\end{theorem}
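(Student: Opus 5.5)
We apply \Cref{thm:stromquist-woodall} once and then recurse on each side. Fix an $n$-agent profile $v_1,\dots,v_n$ together with auxiliary measures $\nu_1,\dots,\nu_h$, and put $\alpha\coloneqq a/n$. The measures fed into \Cref{thm:stromquist-woodall} are $\lambda$, the $h$ auxiliary measures, and all $n$ valuations, so $d=n+h+1$ measures in total. The theorem gives a set $A\subseteq[0,1]$ with at most $2(n+h)$ cuts such that every one of these measures assigns $A$ exactly $\alpha$. Consequently the complement $B\coloneqq[0,1]\setminus A$ receives exactly $1-\alpha=(n-a)/n$ from every measure, and it uses the same cuts. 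If $d=1$ fails to satisfy $d\ge2$, that is harmless, since $n\ge2$ already forces $d\ge3$.

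Next we assign the agents. Choose any $a$ agents as the group $S_A$ and let $S_B$ be the remaining $n-a$ agents. On $A$ we solve the normalized subproblem with agents $S_A$. Its auxiliary measures are the $h$ original $\nu_\ell$ together with the valuations $v_j$ for $j\in S_B$, giving $h+n-a$ measures. Each of these has $A$-mass $\alpha>0$, so normalization is valid. Symmetrically, on $B$ we use agents $S_B$ with auxiliary measures $\nu_1,\dots,\nu_h$ and $v_j$ for $j\in S_A$, giving $h+a$ measures. By \Cref{lem:concatenation-pullback}, the two pulled-back solutions add at most $\kefh{a}{h+n-a}$ and $\kefh{n-a}{h+a}$ new cuts. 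This yields the claimed bound.

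Finally we verify the required properties.
\begin{itemize}
\item \emph{Lengths.} Each bundle in $A$ has length $\alpha/a=1/n$, and each bundle in $B$ has length $(1-\alpha)/(n-a)=1/n$.
\item \emph{Auxiliary measures.} Each $\nu_\ell$ gives every bundle in $A$ the value $\alpha/a=1/n$ and every bundle in $B$ the value $1/n$, so it is equal across all $n$ bundles.
\item \emph{Envy within a side.} For $j\in S_A$, the envy-free guarantee on $A$ handles comparisons with other bundles inside $A$.
\item \emph{Envy across sides.} For $j\in S_A$, the measure $v_j$ was an auxiliary measure on $B$, so $v_j$ gives every bundle in $B$ the value $(1-\alpha)/(n-a)=1/n$. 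Agent $j$'s own bundle is worth at least the average of its $a$ bundles in $A$, namely $\alpha/a=1/n$. Hence $j$ has no envy across the split. The symmetric argument handles agents in $S_B$.
\end{itemize}

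The step needing most care is this cross-side envy comparison. It relies on the exact share being $\alpha$ for \emph{every} valuation, so that the average over one's own side is exactly $1/n$. It also relies on the other side's bundles being equalized precisely for the other side's agents, which the auxiliary measures guarantee. A minor additional point is counting cuts: the $B$-side pullback reuses the boundaries of $A$, which are already counted.
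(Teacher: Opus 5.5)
Your proposal is correct and follows essentially the same route as the paper. Both apply a single Stromquist--Woodall exact-share split with $\alpha=a/n$ to $\lambda$, the $h$ auxiliary measures and all $n$ valuations, and recurse on each side with the other side's valuations added as auxiliary measures. The cross-side envy argument, $v_j(X_j)\ge v_j(A)/a=1/n$ against bundles of value exactly $1/n$, and the cut accounting via the concatenation-pullback lemma also match the paper's proof.
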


\begin{proof}
Let the agent valuations be $\mu_1,\dots,\mu_n$, let the auxiliary measures be $\nu_1,\dots,\nu_h$, and let $\lambda$ denote length.
Apply the Stromquist--Woodall theorem, \Cref{thm:stromquist-woodall}, to the $m=n+h+1$ probability measures $\lambda,\nu_1,\dots,\nu_h,\mu_1,\dots,\mu_n$ with target ratio $\alpha=a/n$.
We obtain a set $A\subseteq [0,1]$ using at most $2m-2=2(n+h)$ cuts such that $\length{A}=a/n$, $\nu_j(A)=a/n$ for every $j$, and $\mu_i(A)=a/n$ for every $i$.
Let $B\coloneqq[0,1]\setminus A$.
Then $\length{B}=(n-a)/n$, $\nu_j(B)=(n-a)/n$ for every $j$, and $\mu_i(B)=(n-a)/n$ for every $i$.

The component boundaries of $A$ and $B$ are already cut by the initial exact-share split.
Thus \Cref{lem:concatenation-pullback} applies to both sides, and only the recursive cuts themselves add new cuts in the original cake.

Choose any $S\subseteq\{1,\dots,n\}$ with $|S|=a$.
On $A$, the agents are $\{\mu_i:i\in S\}$, and the auxiliary measures are $\nu_1,\dots,\nu_h$ together with $\{\mu_i:i\notin S\}$.
After restriction, normalization, and pullback, this subproblem costs at most $\kefh{a}{h+n-a}$ new cuts.
Every resulting bundle has length $1/n$, and each auxiliary measure---including every $\mu_i$ with $i\notin S$---assigns it value $1/n$.

On $B$, the agents are $\{\mu_i:i\notin S\}$, and the auxiliary measures are $\nu_1,\dots,\nu_h$ together with $\{\mu_i:i\in S\}$.
This subproblem costs at most $\kefh{n-a}{h+a}$ new cuts and has the analogous length and value properties.

To verify global envy-freeness, fix $i\in S$ and let $X_i\subseteq A$ be the bundle assigned to $i$.
Local envy-freeness implies $\mu_i(X_i)\ge\mu_i(A)/a=1/n$.
Since $\mu_i$ is auxiliary on $B$, every bundle $Y\subseteq B$ has value $\mu_i(Y)=1/n$.
Thus agent $i$ envies neither a bundle in $A$ nor one in $B$.
The case $i\notin S$ is symmetric, proving the recurrence.
\end{proof}

\subsection{Peeling and a Subquadratic Upper Bound}

We iterate the exact-share recurrence with $a=1$, peeling off one bundle at a time until only a prime-power core remains.

\begin{theorem}\label{thm:ef-peeling}
Let $n\ge 2$, let $2\le r\le n$ be a prime power, and define $t\coloneqq n-r$.
Then, we have $\kef{n}\le 2nt+(r-1)(t+2)\le 3nt+2n$.
\end{theorem}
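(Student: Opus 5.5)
We iterate \Cref{thm:ef-additive} with $a=1$ exactly $t=n-r$ times, each time splitting off a single agent, and then finish the remaining core of $r$ agents with the constrained necklace-splitting theorem.

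\emph{Peeling step.} With $a=1$ on an instance with $n'$ agents and $h$ auxiliary measures, the one-agent side costs $\kefh{1}{h+n'-1}=0$, so
\[
  \kefh{n'}{h}\le 2(n'+h)+\kefh{n'-1}{h+1}.
\]
The number of agents plus auxiliary measures stays equal to $n'+h=n$ throughout. Starting from $\kefh{n}{0}$ and peeling $t$ times gives
\[
  \kef{n}\le 2nt+\kefh{r}{t}.
\]

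\emph{Core step.} It remains to show $\kefh{r}{t}\le (r-1)(t+2)$ for a prime power $r$. Apply \Cref{cor:jpz-diagonal} with the $t+1$ measures $\lambda,\nu_1,\dots,\nu_t$ and the envy-free preference matrix $A_i^j$ from \Cref{sec:prime-power-consequence}. That matrix is closed, proper and equivariant for any $m$, in particular $m=(r-1)(t+2)+1$. The corollary yields $r$ bundles with the following properties:
\begin{itemize}
  \item each bundle has length $1/r$;
  \item each auxiliary measure gives every bundle the same value;
  \item each agent $j$ weakly prefers its own bundle $V_j$;
  \item at most $(r-1)((t+1)+1)=(r-1)(t+2)$ cuts are used.
\end{itemize}
This is exactly an instance of $\kefh{r}{t}$. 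Combining the two steps gives the first inequality, $\kef{n}\le 2nt+(r-1)(t+2)$.

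\emph{Second inequality.} Expand $(r-1)(t+2)\le rt+2r$. Since $r\le n$, this is at most $nt+2n$, so $\kef{n}\le 3nt+2n$.

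\emph{Edge case $t=0$.} Here $n=r$ is itself a prime power and no peeling happens. The bound reads $2(r-1)$, which is \Cref{prop:prime-power-direct}.

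The main obstacle is bookkeeping rather than a new idea. The auxiliary count grows by one at each peel, and the cuts at component boundaries must not be double-counted. Both are handled by the fact that the costs in \Cref{thm:ef-additive} already count only new cuts via \Cref{lem:concatenation-pullback}, so the costs simply telescope.
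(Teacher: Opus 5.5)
Your proof is correct and follows the paper's argument essentially step for step: you peel with $a=1$ in \Cref{thm:ef-additive} to reach $\kef{n}\le 2nt+\kefh{r}{t}$, then bound the prime-power core by \Cref{cor:jpz-diagonal} with the $t+1$ measures $\lambda,\nu_1,\dots,\nu_t$. The only cosmetic difference is the last inequality, where you bound $(r-1)(t+2)\le rt+2r\le nt+2n$ directly rather than substituting $r=n-t$ and dropping the negative terms as the paper does; both are valid.
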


\begin{proof}
Using $\kefh{1}{m}=0$ for all $m\ge 0$, taking $a=1$ in \Cref{thm:ef-additive} gives
\[
  \kefh{n}{h}\le 2(n+h)+\kefh{n-1}{h+1}.
\]
Starting from $\kefh{r+t}{0}$, we apply this recurrence repeatedly while decreasing the number of agents by one and increasing the number of auxiliary measures by one.
Since $r+t=n$, for each $k=0,\dots,t-1$ we obtain
\[
  \kefh{n-k}{k}=\kefh{r+t-k}{k}\le 2(r+t)+\kefh{r+t-k-1}{k+1}=2n+\kefh{n-k-1}{k+1},
\]
since the sum of the two indices stays equal to $n$ throughout.
Summing these $t$ inequalities gives
\[
  \kef{n}\le 2nt+\kefh{r}{t}.
\]
To bound the remaining term, fix any prime-power core instance with $r$ agent valuations and $t$ auxiliary measures $\nu_1,\dots,\nu_t$.
Apply \Cref{cor:jpz-diagonal} with the $t+1$ measures $\lambda,\nu_1,\dots,\nu_t$, and the preference family induced by the $r$ agent valuations.
The length measure makes every core bundle have length $1/r$, each auxiliary measure assigns the same value to all core bundles, and the maximizing preferences make the diagonal assignment envy-free.
Since the number of measures is $t+1$, the cut bound from \Cref{cor:jpz-diagonal} is $(r-1)(t+2)$.
Hence
\[
  \kefh{r}{t}\le (r-1)(t+2).
\]
This proves the first displayed bound.
Since $r=n-t$, we also obtain
\[
  \kef{n}\le 2nt+(n-t-1)(t+2)=3nt-t^2-3t+2n-2,
\]
and dropping the negative terms gives $\kef{n}\le 3nt+2n$.
\end{proof}

To obtain a subquadratic bound, choose a prime power $r\le n$ with a small gap $t=n-r$.
Baker, Harman, and Pintz~\cite{baker_harman_pintz_2001} show that $[x-x^{0.525},x]$ contains a prime for all sufficiently large $x$.

\begin{proposition}\label{prop:ef-bhp}
For all sufficiently large $n$, we have $\kef{n}\le 3n^{1.525}+2n$.
In particular, $\kef{n}=O(n^{1.525})=o(n^2)$.
\end{proposition}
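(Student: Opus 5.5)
The plan is to combine \Cref{thm:ef-peeling} with the Baker--Harman--Pintz prime-gap theorem. By \cite{baker_harman_pintz_2001}, there is an absolute constant $x_0$ such that for every real $x\ge x_0$ the interval $[x-x^{0.525},x]$ contains a prime. Take $n\ge \max\{x_0,2\}$ and apply this with $x=n$. We obtain a prime $p$ with $n-n^{0.525}\le p\le n$.

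Set $r\coloneqq p$ and $t\coloneqq n-r$, so that $0\le t\le n^{0.525}$. To invoke \Cref{thm:ef-peeling} we need $2\le r\le n$. The upper bound $r\le n$ is immediate. For the lower bound, note that $n-n^{0.525}\ge 2$ once $n$ is moderately large, since $n^{0.525}\le n/2$ for $n\ge 2^{1/0.475}$. So after enlarging the threshold on $n$ if necessary, $r\ge 2$. Every prime is a prime power, so the hypotheses of \Cref{thm:ef-peeling} hold.

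\Cref{thm:ef-peeling} then gives
\[
\kef{n}\le 3nt+2n\le 3n\cdot n^{0.525}+2n=3n^{1.525}+2n .
\]
This is the claimed bound, and the asymptotic statement $\kef{n}=O(n^{1.525})=o(n^2)$ follows at once because $1.525<2$.

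The only delicate point is the threshold bookkeeping. We must make sure the prime-gap theorem applies at $x=n$ itself, which it does for all sufficiently large real $x$, hence for all sufficiently large integers $n$. We must also make sure the resulting prime is at least $2$. Both conditions hold for all $n$ beyond an explicit threshold, which is exactly what "sufficiently large" allows. The case $t=0$, where $n$ is itself prime, needs no special treatment: the same inequality still applies, and \Cref{prop:prime-power-direct} even gives the sharper value in that case.
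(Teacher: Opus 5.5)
Your proposal is correct and matches the paper's proof. Both take a prime $r\in[n-n^{0.525},n]$ from Baker--Harman--Pintz, set $t=n-r\le n^{0.525}$, and substitute into the peeling bound $\kef{n}\le 3nt+2n$; your explicit check that $r\ge 2$ for large $n$ is bookkeeping the paper leaves implicit.
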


\begin{proof}
By Baker, Harman, and Pintz~\cite{baker_harman_pintz_2001}, for all sufficiently large $n$ the interval $[n-n^{0.525},n]$ contains a prime $r$.
Thus $r$ is a prime power and $t=n-r\le n^{0.525}$.
Substituting into \Cref{thm:ef-peeling} yields
\[
  \kef{n}\le 3n\cdot n^{0.525}+2n = 3n^{1.525}+2n=O(n^{1.525}).
\]
\end{proof}

\begin{remark}
The proof uses a number-theoretic result only to bound the gap $t=n-r$.
Accordingly, any stronger bound on the distance from $n$ to a prime power improves the all-$n$ bound from \Cref{thm:ef-peeling}.
Concretely, if every sufficiently large $n$ admits a prime power $r\in[n-\Delta(n),n]$, then $t=n-r\le\Delta(n)$ and \Cref{thm:ef-peeling} yields $\kef{n}\le 3n\Delta(n)+2n$.
Under the Riemann hypothesis, Cram\'er's classical prime-gap estimate gives $\Delta(n)=O(\sqrt n\log n)$ and therefore $\kef{n}=O(n^{3/2}\log n)$ \cite{cramer_1936,dudek_2015}.
The scale $(\log n)^2$ is a classical conjectural benchmark for prime gaps, originating in Cram\'er's probabilistic model \cite{cramer_1936}.
Under a Cram\'er-type hypothesis giving $\Delta(n)=O((\log n)^2)$, we would obtain $\kef{n}=O\bigl(n(\log n)^2\bigr)$.
\end{remark}

\section{Query Complexity in the Robertson--Webb Model}\label{sec:uncomputable}

In this section, we show that no deterministic protocol in the Robertson--Webb model terminates after finitely many queries on every instance and always returns an exact equal-length proportional allocation.
We then show that proportionality can be approximated to arbitrary accuracy with finitely many queries while maintaining exact lengths.

Since length is public, the protocol may place cuts at specified coordinates and perform exact arithmetic on coordinates without querying an agent.
Its only access to a private valuation $v_i$ is through $\textsc{Eval}_i(x)$, which returns $v_i([0,x])$, and $\textsc{Cut}_i(r)$, which returns the leftmost $x$ satisfying $v_i([0,x])=r$.
These queries also give access to arbitrary intervals.
The value of $[a,b]$ is obtained by subtracting $\textsc{Eval}_i(a)$ from $\textsc{Eval}_i(b)$.
To find the leftmost $b\ge a$ satisfying $v_i([a,b])=r$, the protocol first obtains $v_i([0,a])$ and then makes the query $\textsc{Cut}_i(v_i([0,a])+r)$.
Only queries for which the requested cut exists are made.
Protocols output finite unions of intervals whose endpoints are either returned by $\textsc{Cut}$ queries or specified by their coordinates.
\subsection{Finite-Query Impossibility}\label{sec:rw-impossibility}

For two agents with the same valuation $v$, an equal-length allocation has the form $(A,[0,1]\setminus A)$ with $\length{A}=1/2$.
It is proportional if and only if
\[
  v(A)=v([0,1]\setminus A)=\frac12.
\]
Thus, the problem is to halve simultaneously the known length measure $\lambda$ and an unknown measure $v$.
Exact division of two unknown measures cannot in general be accomplished by a finite algorithm \cite[Thm.~2.2]{robertson_webb_1997}.
We adapt their decision-tree argument to show that the impossibility remains when length is known.

\begin{theorem}\label{thm:main-un}
There is no deterministic protocol in the Robertson--Webb model that terminates after finitely many queries on every instance and always returns an exact equal-length proportional allocation.
This remains true for two agents with the same nonatomic valuation.
\end{theorem}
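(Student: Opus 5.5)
We argue by contradiction with an adversary (decision-tree) argument in the style of Robertson and Webb \cite[Thm.~2.2]{robertson_webb_1997}, restricted to two agents sharing one valuation $v$. Suppose a deterministic protocol $P$ always terminates after finitely many queries and outputs $(A,[0,1]\setminus A)$ with $\length{A}=1/2$ and $v(A)=1/2$. Fix a reference instance, say $v=\lambda$. Run $P$ on it. It makes finitely many queries and outputs a finite union of intervals $A$ whose endpoints are either public coordinates or answers to \textsc{Cut} queries. We will construct a second nonatomic valuation $v'$ that gives the same answers to every query asked along this run. By determinism, $P$ then outputs the same $A$ on $v'$. We will arrange $v'(A)\ne 1/2$, so the output is not proportional, which contradicts the assumption about $P$.

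The construction of $v'$ proceeds as follows. Let $E$ be the finite set of all points that occur in the run: the \textsc{Eval} positions, the \textsc{Cut} answers, and the public endpoints of $A$. Consistency with all queries requires two things: $v'([0,x])=v([0,x])$ for every $x\in E$, and every \textsc{Cut} answer $x$ must remain the \emph{leftmost} point attaining its value. Both hold if $v'$ agrees with $v$ on $[0,e]$ for every $e\in E$ and has positive density near each cut point. This is arranged by keeping the mass of every elementary interval between consecutive points of $E$ equal to its $v$-mass while redistributing mass inside the interval.

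Since $\length{A}=1/2$ is fixed, the output $A$ is a union of elementary intervals of the partition induced by $E$. So $v'(A)=v(A)$ if we only redistribute inside elementary intervals. This is the main obstacle, and it is exactly where the known length differs from Robertson--Webb: the endpoints of $A$ lie in $E$, so the adversary seemingly cannot change $v'(A)$. The way out is that an endpoint of $A$ given as a public coordinate is a ``geometric cut''. Any such point $p$ that was not itself an \textsc{Eval} position is not pinned down by the answers. We may therefore shift mass across $p$ inside the elementary interval of the queried points containing it.

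The key step is to show that $A$ must have at least one such unpinned endpoint, or else be built from \textsc{Cut} answers in a way that also lets us perturb. In the second case we use the dual perturbation: we move the geometric position of a value cut. Concretely, take the elementary interval $(e,e')$ of queried points containing the relevant boundary $p$ of $A$, and replace $v$ there by a density that keeps $v'((e,e'))$ fixed but puts slightly more mass on one side of $p$. Then $v'(A)=v(A)\pm\delta\ne1/2$.

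If instead every endpoint of $A$ is queried, then $A$ is determined by the query points, and the values $v(A)$ and $\length{A}$ are both rational functions of the answers. In this case we choose the reference instance generically, for example a density whose cumulative values at all coordinates the protocol might query avoid the finitely many algebraic coincidences needed for $v(A)=\lambda(A)=1/2$. We then argue as in Robertson--Webb that we can perturb $v$ at an interior \textsc{Cut} answer by a small shift of its position, compensated elsewhere, so that the answers change only continuously. This last case, showing that the simultaneous equality $\length{A}=v(A)=1/2$ cannot survive every admissible perturbation, is where I expect the bulk of the care to be needed. We handle it by an inductive adversary along the (finite) branch, choosing at each query answers inside an open set of consistent instances, so that at termination an open family of valuations remains, on which $v(A)$ is nonconstant.
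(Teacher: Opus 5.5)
\textbf{There is a genuine gap: the case where every endpoint of the output is a query point is not handled, and that is the main case.}

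Your first case is sound. Suppose some endpoint $p$ of $A$ is a coordinate specified by the protocol and is neither an \textsc{Eval} position nor a \textsc{Cut} answer. Then redistributing mass across $p$ inside the surrounding elementary interval keeps every answer and changes $v(A)$. The paper does essentially this when it adds such output endpoints to its partition.

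Now suppose every endpoint of $A$ is an \textsc{Eval} position or a \textsc{Cut} answer. Then $v'(A)$ is a signed sum of the numbers $v'([0,e])$ over those endpoints, and each of these numbers is fixed by the transcript. So every valuation consistent with the run gives $A$ the same value. The ``open family on which $v(A)$ is nonconstant'' you aim for at termination therefore cannot exist.

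With your reference instance $v=\lambda$, this is exactly what a correct protocol can do: ask $\textsc{Eval}(1/2)$, receive $1/2$, and output $[0,1/2]$. Defeating it requires changing some answer. After that, a deterministic protocol may ask entirely different queries and output an entirely different set. The run does not depend continuously on the answers, so ``shift a \textsc{Cut} answer slightly and compensate elsewhere'' gives no control over the new output.

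The same objection applies to choosing the reference instance generically. The queried coordinates are chosen adaptively from real-valued answers, so there is no fixed finite list of coincidences to avoid in advance. Note also that your ``(finite) branch'' is only known to be finite for genuine instances, not for a sequence of modified answers.

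\textbf{What is missing.} You should commit to no instance in advance and instead answer adaptively, maintaining an invariant on the transcript itself:
every piece of the partition generated by all points seen so far has positive value, and no union of pieces has both length and value $1/2$.

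Each new breakpoint carries exactly one free real parameter:
\begin{itemize}
\item when the protocol specifies the coordinate, the free parameter is how the value of the split piece is divided;
\item when the point answers a \textsc{Cut} query, the free parameter is where the point lies.
\end{itemize}
Only finitely many choices of that parameter create a union of length and value $1/2$, so the invariant can be preserved. Adding the output's endpoints by the same rule then shows that neither bundle has length and value $1/2$.

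Finally, you must show that the adversarial answers are realized by a single nonatomic valuation even if the branch were infinite. Only then does the assumption of termination on every instance apply. The paper does this by choosing each split so that, at the $h$th refinement, the new densities stay within a factor $2^{\pm 2^{-h}}$ of the parent density. The piecewise-constant densities then converge uniformly to a density in $[1/2,2]$, which also makes every \textsc{Cut} answer the leftmost solution.
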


\begin{proof}
Suppose that such a protocol exists, and fix it for two agents.
We answer its queries adversarially while constructing a nonatomic valuation $v$, which will be shared by the two agents.
Begin with $\mathcal P^0=\{[0,1]\}$ and assign value $1$ to its sole member.
After the $j$th query, let $\mathcal P^j$ be the partition obtained from $\mathcal P^{j-1}$ by adding every new endpoint introduced by that query.
At every stage, each member of $\mathcal P^j$ has positive value, and no $\mathcal U\subseteq\mathcal P^j$ satisfies $\lambda(\bigcup_{U\in\mathcal U}U)=v(\bigcup_{U\in\mathcal U}U)=1/2$.
Both properties hold initially.

It suffices to show that both properties are preserved whenever a new endpoint is added to the current partition $\mathcal P$.
Suppose that the new endpoint splits a subinterval $I\in\mathcal P$ into two subintervals $I'$ and $I''$ of positive length, producing $\mathcal P'$.
If the protocol specifies the splitting point, the lengths of $I'$ and $I''$ are determined, but the adversary can choose how the value of $I$ is divided between them.
If the splitting point is the answer to a $\textsc{Cut}$ query, the query determines how the value is divided, but the adversary can choose the lengths by selecting the point.
Write $d=v(I)/\lambda(I)$ for the average density on $I$.
In the first case, the adversary chooses $v(I')\in(0,v(I))$ and sets $v(I'')=v(I)-v(I')$.
In the second case, it chooses $\lambda(I')\in(0,\lambda(I))$ and sets $\lambda(I'')=\lambda(I)-\lambda(I')$.
Fix $\mathcal U\subseteq\mathcal P'$.
If $\mathcal U$ contains both $I'$ and $I''$, replacing them by $I$ gives a subcollection of $\mathcal P$ whose union has the same length and value.
If $\mathcal U$ contains neither, then $\mathcal U\subseteq\mathcal P$.
By the second property for $\mathcal P$, neither case yields both length and value $1/2$.
It remains to consider a subcollection $\mathcal U$ containing exactly one of $I'$ and $I''$.
If the protocol specified the splitting point, then $\lambda(\bigcup_{U\in\mathcal U}U)$ is fixed.
If this length is not $1/2$, the union cannot have both length and value $1/2$ for any choice of $v(I')$.
If it is $1/2$, the equation $v(\bigcup_{U\in\mathcal U}U)=1/2$ excludes at most one choice of $v(I')$.
If the splitting point answers a $\textsc{Cut}$ query, the same argument applies with length and value interchanged.
Thus, only finitely many values of the quantity chosen by the adversary are excluded.
We choose this quantity close to the value that would give both new subintervals the same density $d$ as $I$.
At the $h$th time a new endpoint is added to the partition, we choose $v(I')$ close to $d\lambda(I')$ in the first case and $\lambda(I')$ close to $v(I')/d$ in the second.
We choose it close enough that the density of each new subinterval lies between $2^{-2^{-h}}d$ and $2^{2^{-h}}d$.
The target value lies strictly inside the relevant interval, and only finitely many choices are excluded.
Thus, such a choice exists.
It preserves both properties.

This rule determines the adversary's answers.
For an $\textsc{Eval}_i(x)$ query, split at $x$ if necessary and return the total value of the subintervals to its left.
For a $\textsc{Cut}_i(q)$ query, return an existing endpoint if the cumulative value there is $q$.
Otherwise, $q$ lies strictly between the cumulative values at the endpoints of a unique subinterval because every member of the partition has positive value.
The query fixes the value of the left part of this subinterval, and the adversary chooses its length, and hence the returned point, by the rule above.

If the protocol announces an output, add its endpoints to the current partition one at a time using the same refinement rule.
The values assigned during this construction are consistent because a refinement divides the value of one subinterval without changing the value of any earlier subinterval.
If only finitely many new endpoints are added, assign a constant density to each member of the eventual partition.
If infinitely many are added, let $f_h$ be the density that is constant on each member of the partition after the $h$th new endpoint has been added and realizes its assigned value.
When the $h$th new endpoint is added, the density of each new subinterval differs from the density on the subinterval that was split by a factor between $2^{-2^{-h}}$ and $2^{2^{-h}}$.
Since $\sum_{h\ge1}2^{-h}=1$, every $f_h$ takes values between $1/2$ and $2$.
For $m>h$, the ratio $f_m/f_h$ lies uniformly between $2^{-\sum_{j>h}2^{-j}}$ and $2^{\sum_{j>h}2^{-j}}$.
Hence, $(f_h)$ is uniformly Cauchy and converges to a density $f$ with the same bounds.
Thus, in either case the construction defines a single nonatomic valuation $v$ consistent with every answer given to the protocol.
Its positive density also makes every answer to a $\textsc{Cut}$ query the unique leftmost solution.

Because the protocol is deterministic, its run on $v$ reproduces the adversarial answer sequence.
If this sequence is infinite, the protocol does not terminate on $v$, contradicting the assumption.
Otherwise, the protocol produces an output after finitely many queries, and its bundles are unions of members of the partition obtained after adding their endpoints.
The second property shows that neither bundle can have both length and value $1/2$.
Hence, the output is not an equal-length proportional allocation, a contradiction.
\end{proof}

\subsection{Approximation with Finitely Many Queries}\label{sec:rw-approximation}

Although exact computation is impossible, proportionality can be approximated with finitely many queries.
For $\varepsilon>0$, call an equal-length allocation \emph{$\varepsilon$-proportional} if $v_i(C_i)\ge 1/n-\varepsilon$ for every agent $i$.

\begin{theorem}\label{thm:rw-approx}
For every $\varepsilon>0$, there is a protocol in the Robertson--Webb model that computes an equal-length $\varepsilon$-proportional allocation using $O(n^2/\varepsilon)$ queries and at most $2n-2$ cuts.
\end{theorem}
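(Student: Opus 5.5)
We discretize and then invoke \Cref{thm:main} on a piecewise-uniform surrogate profile. Fix $\varepsilon>0$ and put $K\coloneqq\lceil 2n/\varepsilon\rceil$, so that $1/K\le\varepsilon/(2n)$.

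\textbf{Step 1: queries.} For each agent $i$ and each $k\in\{0,\dots,K\}$, query $\textsc{Eval}_i(k/K)$. This costs $n(K+1)=O(n^2/\varepsilon)$ queries and reveals the values $v_i(P_k)$ of the grid intervals $P_k=((k-1)/K,k/K]$. Let $\widehat v_i$ be the measure with constant density $K\,v_i(P_k)$ on each $P_k$. Then $\widehat v_i$ is a nonatomic probability measure that agrees with $v_i$ on every grid interval and is completely known to the protocol.

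\textbf{Step 2: existence for the surrogate.} By \Cref{thm:main}, the profile $(\widehat v_1,\dots,\widehat v_n)$ has an equal-length proportional allocation with at most $2n-2$ cuts. The protocol must actually find one without further queries, and this is the step I expect to be the main obstacle. Fix an ordering of the at most $2n-2$ cut positions together with an assignment of the resulting pieces to agents. There are finitely many such combinatorial types, counting also which grid cell contains each cut. For a fixed type, the constraints ``total length of bundle $j$ is $1/n$'' and ``$\widehat v_i(C_i)\ge 1/n$'' are linear in the cut coordinates, because the densities are constant on each cell. Hence each type gives a linear feasibility problem, which exact real arithmetic solves with no queries. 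Enumerating types is exponential in time but free in queries. The problem statement only asks for query complexity and cut count, which is exactly where it differs from the polynomial-time variant mentioned for \Cref{prop:rw-approx-polytime}. Existence from \Cref{thm:main} guarantees that some type is feasible.

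\textbf{Step 3: transfer the error.} Output the allocation found for the surrogate. Lengths are exact, so every bundle has length $1/n$, and the cut count is at most $2n-2$. To bound the value loss, note that $v_i$ and $\widehat v_i$ agree on every union of whole grid cells. A bundle $C$ therefore differs from a union of whole cells only inside cells that contain a cut point. Within such a cell, each measure gives $C$ a value between $0$ and the value of the whole cell. Hence $|v_i(C)-\widehat v_i(C)|$ is at most the total $v_i$-mass of the cells containing cuts.

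That mass can be large, because a single cell may carry most of $v_i$. The naive bound of $2n\cdot(\text{cell mass})$ is therefore useless for a length grid, and the grid should be in value space instead. I would redo Step 1 using $\textsc{Cut}_i(k/K)$ for every agent and every $k$, so that the grid is the common refinement of all agents' value-quantile points. Then every cell has $v_i$-mass at most $1/K$ for every agent $i$, and the query count is still $O(nK)=O(n^2/\varepsilon)$. The surrogate densities are defined on this refined grid by evaluating each agent at every breakpoint. The error then becomes
\[
|v_i(C)-\widehat v_i(C)|\le (2n-2)/K\le\varepsilon .
\]

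The one remaining issue is that these extra evaluations cost $n\cdot nK$ queries, not $nK$. I would avoid them by not evaluating at all: agent $i$'s values at foreign breakpoints are only needed up to the bracketing of its own quantiles, and interpolating linearly between agent $i$'s own quantile points already gives an error below $1/K$ per cut cell. With that, the total is $O(n^2/\varepsilon)$ queries and the allocation is $\varepsilon$-proportional with at most $2n-2$ cuts.
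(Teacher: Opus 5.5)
Your final protocol (the one in your last two paragraphs, after you discard the length grid and the per-breakpoint evaluations) is correct. It differs from the paper's in where \Cref{thm:main} is applied.

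\textbf{Your route.} You replace each $v_i$ by the surrogate $\widehat v_i$ obtained by interpolating $v_i([0,x])$ linearly between agent $i$'s own quantile points $\textsc{Cut}_i(k/K)$. This surrogate is a fully known nonatomic probability measure, and it is affine on every cell of the common refinement. So \Cref{thm:main} applied to the surrogate profile, together with enumeration of types and a linear program in the cut coordinates (lengths and surrogate values), yields an allocation that is exactly proportional for the surrogates. A single transfer then suffices: the two cumulative functions differ by at most $1/K$ at each of the at most $2n-2$ cuts, so $v_i(C_i)\ge 1/n-(2n-2)/K$.

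\textbf{The paper's route.} The paper never builds a surrogate measure. It applies \Cref{thm:main} to the true profile. It estimates each bundle value by a number determined by the type alone, rounding each $v_i([0,x_j])$ down to the bracketing quantile $a_{ij}/B$. It then accepts any type whose estimate is at least $1/n-q/B$. It therefore pays the grid error twice and needs $B=\lceil 2q/\varepsilon\rceil$.

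\textbf{Comparison.} Your route roughly halves the required grid resolution and makes the output exactly proportional for an explicit surrogate. The cost is that each per-type linear program carries value constraints as well as length constraints. The paper's route keeps each per-type program a pure length-feasibility problem, because its value test only compares type-determined numbers. Both use $O(n^2/\varepsilon)$ \textsc{Cut} queries and at most $2n-2$ cuts.

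\textbf{For the write-up.} Start directly from the quantile grid and the interpolated surrogate, and drop the intermediate versions. State the error bound using agent $i$'s own grid cells, on which $v_i$ and $\widehat v_i$ agree. Do not use the cells of the common refinement, on which the two measures need not agree.
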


\begin{proof}
The case $n=1$ is trivial.
We may therefore assume $n\ge2$.
If $\varepsilon\ge1/n$, any partition into $n$ intervals of length $1/n$ suffices without queries.
We may therefore assume $0<\varepsilon<1/n$ and put $q=2n-2$.
Set $B=\lceil 2q/\varepsilon\rceil$.
For each agent $i$ and each $r=1,\dots,B-1$, ask $\textsc{Cut}_i(r/B)$.
Together with $0$ and $1$, the returned points partition the cake into $B$ intervals, each of value $1/B$ to agent $i$.
We call this partition agent $i$'s grid.
Let $G=\{I_1,\dots,I_s\}$ be the interval partition whose endpoints are all the points in the $n$ agents' grids, with the intervals listed from left to right.
Then $s\le n(B-1)+1\le nB$.
Every $I_r$ is contained in one interval of each agent's grid and therefore has value at most $1/B$ to every agent.

Enumerate all combinatorial types of allocations with $q$ cuts.
Such a type consists of indices $1\le r_1\le\dots\le r_q\le s$ and an agent label for each of the $q+1$ intervals induced by the cuts.
It represents allocations whose $j$th cut lies in the closure of $I_{r_j}$.
At a grid endpoint, the same placement may therefore be represented by either adjacent interval.
Equal consecutive indices are allowed because two cuts may lie in the same interval of $G$, and the interval between them may have positive length.
Allowing coincident cuts also includes allocations with fewer than $q$ cuts.
Once a type is fixed, the cut positions are the only variables.
The conditions that the $j$th cut lies in the closure of $I_{r_j}$, that the cuts remain ordered, and that every labeled bundle has length exactly $1/n$ form a finite system of linear equalities and inequalities.

For any feasible placement $x_1\le\dots\le x_q$ of a fixed type, let $C_i$ be the union of the intervals labeled $i$.
For each cut $x_j$ in the closure of $I_{r_j}$, agent $i$'s grid determines an integer $a_{ij}$ such that $a_{ij}/B\le v_i([0,x_j])\le (a_{ij}+1)/B$.
Let $P_h$ be the piece between $x_h$ and $x_{h+1}$ under the convention of \Cref{sec:preliminaries}, where $x_0=0$ and $x_{q+1}=1$.
The value of $C_i$ is the sum of $v_i(P_h)=v_i([0,x_{h+1}])-v_i([0,x_h])$ over the pieces $P_h$ labeled $i$.
Replace each $v_i([0,x_j])$ in this sum by $a_{ij}/B$, and denote the result by $\widehat v_i$.
This number is determined by the type and the queried grids.
Each cut contributes to the error at most once.
Hence,
\begin{align}
  \bigl|v_i(C_i)-\widehat v_i\bigr| &\le \frac qB. \label{eq:grid-error}
\end{align}

By \Cref{thm:main}, there is an exact equal-length proportional allocation with at most $q$ cuts.
Its combinatorial type occurs in the enumeration and, by \eqref{eq:grid-error}, satisfies $\widehat v_i\ge1/n-q/B$ for every agent $i$.
Select any such type and output any feasible placement of its cuts.
Applying \eqref{eq:grid-error} once more gives
\begin{align*}
  v_i(C_i) &\ge \frac1n-\frac{2q}{B}\ge \frac1n-\varepsilon
  \qquad(i=1,\dots,n).
\end{align*}
The protocol uses $n(B-1)=O(n^2/\varepsilon)$ queries and at most $q=2n-2$ cuts.
\end{proof}

The theorem concerns query complexity only, since enumerating the combinatorial types may take exponential time.
Allowing more cuts gives the following polynomial-time alternative.

\begin{proposition}\label{prop:rw-approx-polytime}
For every $\varepsilon>0$, an equal-length $\varepsilon$-proportional allocation can be computed in polynomial time in the real-RAM oracle model using $O(n^3/\varepsilon)$ Robertson--Webb queries and $O(n^2/\varepsilon)$ cuts.
\end{proposition}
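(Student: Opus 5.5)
We discretize the cake exactly as in the proof of \Cref{thm:rw-approx} and replace the enumeration over combinatorial types by a flow computation. With $B=\lceil 2n/\varepsilon\rceil$, each agent $i$ makes $B-1$ \textsc{Cut} queries and gets a grid of $B$ intervals of value $1/B$ each. The common refinement $G=\{I_1,\dots,I_s\}$ of all grids has $s\le nB=O(n^2/\varepsilon)$ intervals, and each $I_r$ has value at most $1/B$ to every agent. In addition, for every agent $i$ and every endpoint of $G$ we issue an \textsc{Eval} query. This yields the exact value $v_i(I_r)$ of every refined interval for every agent, at a cost of $n(s+1)=O(n^3/\varepsilon)$ queries.

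We then solve a fractional version of the problem over $G$. Consider the transportation polytope of nonnegative $x_{ir}$ satisfying
\[
\sum_i x_{ir}=\lambda(I_r)\quad\text{for every } r,\qquad \sum_r x_{ir}=1/n\quad\text{for every } i,
\]
with the intended meaning that agent $i$ receives length $x_{ir}$ inside $I_r$. Agent $i$'s value is estimated by $\sum_r x_{ir}\,v_i(I_r)/\lambda(I_r)$, i.e.\ as if the density were constant on $I_r$. We maximize the minimum estimated value, or simply test feasibility of the condition that every estimate is at least $1/n$. This is feasible: by the uniform-density fractional solution $x_{ir}=\lambda(I_r)/n$, each agent's estimate is $\sum_r v_i(I_r)/n=1/n$. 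Hence we can even take the trivial choice $x_{ir}=\lambda(I_r)/n$ directly, with no LP. To round, inside each $I_r$ we place the pieces $x_{1r},\dots,x_{nr}$ consecutively as subintervals. This uses at most $n-1$ new cuts per interval, plus the $s-1$ grid endpoints, for a total of $O(ns)=O(n^3/\varepsilon)$ cuts. That is too many, so we instead start from a vertex of the transportation polytope: a vertex has at most $s+n-1$ positive entries, so the number of pieces, and hence of cuts, is $O(s+n)=O(n^2/\varepsilon)$. The error analysis then goes as follows. The true value of the portion of $I_r$ given to $i$ differs from its estimate by at most $v_i(I_r)\le 1/B$, and only intervals that are split among several agents contribute error.

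The main obstacle is controlling this error. Rounding to a vertex breaks the uniform-density feasibility argument, and splitting $I_r$ can still lose up to $1/B$ per split interval per agent. We therefore add the value constraints $\sum_r x_{ir}v_i(I_r)/\lambda(I_r)\ge 1/n$ to the LP; this is still feasible by the uniform solution, and is solvable in polynomial time. A basic feasible solution then has at most $s+2n-1$ positive entries. In its support graph (agents versus intervals), at most $O(n)$ intervals are shared by two or more agents, since the forest-like structure of a vertex, up to the $n$ extra constraints, allows only $O(n)$ non-leaf interval vertices. The total error for agent $i$ is therefore at most $O(n)/B$. Choosing $B=\Theta(n/\varepsilon)$ per agent's grid makes this at most $\varepsilon$, while $s=O(n^2/\varepsilon)$ keeps both the query count $O(n^3/\varepsilon)$ and the cut count $O(s+n)=O(n^2/\varepsilon)$ within the claimed bounds. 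Verifying the $O(n)$ bound on shared intervals via a support-counting argument for vertices is the step I expect to require the most care.
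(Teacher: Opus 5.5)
Your final argument is correct and is essentially the paper's proof. It uses the same grid refinement $G$, the same exact queries of every $v_i(I_r)$ at cost $O(n^3/\varepsilon)$, the same LP with partition, length, and value constraints made feasible by the uniform solution, and the same basic feasible solution with $s+O(n)$ positive entries, which yields $O(n)$ split intervals, error at most $O(n)/B\le\varepsilon$, and $s-1+O(n)$ cuts. The step you flag as delicate is immediate, as in the paper, and needs no forest structure. Every $I_r$ carries at least one positive entry, so at most $2n-1$ entries are surplus, and hence at most $2n-1$ intervals are split.
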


\begin{proof}
The case $n=1$ is immediate.
We may therefore assume $n\ge2$.
If $\varepsilon\ge1/n$, any partition into $n$ intervals of length $1/n$ suffices without queries.
We may therefore assume $0<\varepsilon<1/n$.
Construct $B$ and $G=\{I_1,\dots,I_s\}$ as in the proof of \Cref{thm:rw-approx}, and query every agent's value for every $I_r$.
This uses $O(ns)=O(n^3/\varepsilon)$ queries.
Introduce a nonnegative variable $y_{ir}$ for the fraction of the length of $I_r$ assigned to agent $i$.
Impose $\sum_i y_{ir}=1$ for every $r$, $\sum_r \length{I_r}y_{ir}=1/n$ for every $i$, and $\sum_r v_i(I_r)y_{ir}\ge 1/n$ for every $i$.
The assignment $y_{ir}=1/n$ is feasible, and a basic feasible solution has at most $s+2n$ positive variables.
Since every $I_r$ has at least one positive variable, there are at most $2n$ positive variables beyond the first one for each interval.
Subdividing an interval requires one fewer cut than the number of its positive variables.
Thus, the total number of additional cuts is at most $2n$.
Dividing each such interval into subintervals of the prescribed lengths decreases each agent's value from its fractional value by at most $2n/B\le\varepsilon$, because $v_i(I_r)\le1/B$.
The allocation uses at most $s-1+2n=O(n^2/\varepsilon)$ cuts.
\end{proof}

\section{Two-Knife Augmentation}\label{sec:moving-knife}

We augment the Robertson--Webb model with two-knife queries.

We use the following two additional types of queries and count them separately from Robertson--Webb queries.
We assume that the two knives can be moved continuously and monotonically while maintaining the specified length or value.
A \textsc{MoveLength} query specifies a fixed length, initial and final knife positions, a value threshold for each of a set of agents, and an integer $k$.
The two knives move continuously and monotonically while the portion of the current subcake between them keeps the specified length.
Each agent raises a hand whenever that portion is worth at least the agent's threshold.
If fewer than $k$ hands are raised initially and at least $k$ are raised finally, the query stops at the first time when at least $k$ hands are raised and returns the two knife positions.
A \textsc{MoveValue}$_i$ query specifies a value $\alpha$, a target length $\ell$, initial positions of both knives at which the intervening portion is worth $\alpha$, and a final position for one of the knives.
Agent $i$ moves the specified knife continuously and monotonically toward its final position and moves the other knife continuously and monotonically with it so that the intervening portion remains worth $\alpha$.
If $\ell$ lies between the initial length and the length when the specified knife reaches its final position, the query stops when the intervening portion first has length $\ell$ and returns the two knife positions.
If a motion or answer is not unique, any one may be used, and the procedure must be correct for every choice.
Knife positions and query endpoints count as cuts only when they form boundaries of the final allocation.

Queries may also be applied to a disconnected subcake $C=J_1\cup\cdots\cup J_q$ of length $L$, where the intervals are pairwise disjoint and listed from left to right.
Removing the gaps and concatenating the components gives a copy of $C$ on $[0,L]$.
For $0\le x\le y\le L$, let $C\langle x,y\rangle$ be the portion corresponding to $(x,y]$ in this copy, with the left endpoint included when $x=0$, as in the convention of \Cref{sec:preliminaries}.
This portion has length $y-x$ and at most $q$ components in the original cake.
Its value can be computed using $O(q)$ \textnormal{\textsc{Eval}} queries, and given one endpoint and a target value, the other endpoint can be found with $O(q)$ \textnormal{\textsc{Eval}} queries and at most one \textnormal{\textsc{Cut}} query.
Crossing a gap leaves the corresponding portion unchanged.
Thus, a continuous monotone motion on $[0,L]$ also gives one on the original cake.

\subsection{Balanced Halving}

We show that one \textsc{MoveLength} query splits an even subinstance into two equal subinstances while preserving the proportional guarantee of every agent.
Applying this split recursively attains the optimal cut bound when the number of agents is a power of two.

\begin{theorem}\label{thm:moving-power-two}
For every nonatomic valuation profile, if $n$ is a power of two, an equal-length proportional allocation can be produced using at most $n-1$ two-knife queries and at most $2n-2$ cuts.
\end{theorem}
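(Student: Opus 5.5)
We prove the statement by induction on $n=2^k$, using a stronger inductive claim. Let $C$ be a subcake that is a finite union of intervals whose component boundaries are already counted as cuts, of length $m/n$, and let $S$ be a set of $m=2^j$ agents with $v_i(C)\ge m/n$ for all $i\in S$. Then $C$ can be divided among the agents of $S$ into bundles of length $1/n$, each worth at least $1/n$ to its owner, using at most $m-1$ two-knife queries and at most $2m-2$ new cuts inside $C$. The base case $m=1$ gives the whole subcake to the single agent, with no queries and no cuts. The theorem is the case $C=\C$, $m=n$. The cut count then follows from the recurrence $c(m)\le 2+2c(m/2)$, which gives $c(m)\le 2m-2$. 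The query count follows from $q(m)\le 1+2q(m/2)$, which gives $q(m)\le m-1$. By \Cref{lem:concatenation-pullback}, or directly in the concatenated coordinates on $[0,L]$ with $L=m/n$, we may treat $C$ as the interval $[0,L]$.

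For the split step, we use one \textsc{MoveLength} query with window length $L/2$. The window $W_x=C\langle x,x+L/2\rangle$ moves from $x=0$ to $x=L/2$. Each agent $i\in S$ has threshold $v_i(C)/2$, and we set $k=m/2$. The key observation is that $W_{L/2}$ is the complement of $W_0$ in $C$. So at $x=0$ and $x=L/2$, each agent sees exactly one window worth at least half, unless the agent values both at exactly half, in which case both count. Hence if $a$ agents raise hands at $x=0$, then at least $m-a$ raise hands at $x=L/2$.

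We split into cases according to $a$.
\begin{itemize}
\item If $a<m/2$, then at least $m/2$ hands are raised finally, so the query stops at the first $x^\*$ with at least $m/2$ hands raised.
\item If $a\ge m/2$, we take $x^\*=0$. This uses no query, or we count one query anyway; the bound still holds.
\end{itemize}

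At $x^\*$ we give $W=W_{x^\*}$ to a set $S_1$ of exactly $m/2$ agents whose hands are raised, and give $C\setminus W$ to the remaining agents $S_2$. Each agent in $S_1$ has $v_i(W)\ge v_i(C)/2\ge (m/2)/n$. Every agent in $S_2$ must also satisfy $v_i(C\setminus W)\ge v_i(C)/2$, and this is the main obstacle.

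The difficulty is that several hands may rise simultaneously at $x^\*$, which could leave some agents in $S_2$ with their hands raised and others with them lowered. An agent whose hand is down has $v_i(W)<v_i(C)/2$ and is therefore fine in $S_2$. The concern is an agent with its hand raised at $x^\*$ who must go to $S_2$. We need to show that every agent with a raised hand at $x^\*$ but not in $S_1$ still values $C\setminus W$ at least half.

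We plan to use continuity and the first-time stopping rule. Just before $x^\*$, fewer than $m/2$ hands were raised. Hence at least $m/2+1$ agents had $v_i(W_x)<v_i(C)/2$ for $x$ slightly less than $x^\*$. By continuity of $x\mapsto v_i(W_x)$, which follows from nonatomicity, these agents satisfy $v_i(W_{x^\*})\le v_i(C)/2$, so $v_i(C\setminus W)\ge v_i(C)/2$. Since the windows are closed, hands rise exactly when the value reaches the threshold, so any agent newly raised at $x^\*$ has value exactly $v_i(C)/2$. Such an agent is indifferent between $W$ and $C\setminus W$.

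We therefore choose $S_1$ to contain all agents whose hands were raised before $x^\*$ (fewer than $m/2$), topped up with newly raised agents. All remaining agents then value $C\setminus W$ at least half. The case $x^\*=0$ is handled the same way, using complementarity: any agent not among a chosen set of $m/2$ raised agents either has its hand down or is exactly indifferent.

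Finally, $W$ and its complement each have length $L/2=(m/2)/n$. In the original cake the split adds at most the two knife positions as new cuts. Recursing on both halves with the induction hypothesis completes the proof.
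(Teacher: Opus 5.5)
Your case $a<m/2$ is sound and matches the paper's argument. By continuity, every agent who strictly prefers $W_{x^*}$ already has its hand raised just before $x^*$, so there are fewer than $m/2$ of them, and every other agent raised at $x^*$ is indifferent. The gap is in the case $a\ge m/2$. There you take $x^*=0$ and assert that every agent outside a chosen set of $m/2$ raised agents either has its hand down or is exactly indifferent. Complementarity only says that each agent weakly prefers at least one of $W_0$ and $C\setminus W_0$. It gives no bound on how many agents \emph{strictly} prefer $W_0$.

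For a concrete failure, take the top level and give every agent density $2$ on $[0,1/2]$. Then $W_0=[0,1/2]$ and $a=m$. Every agent values $W_0$ at $1$ and its complement at $0$, so the $m/2$ agents you place in $S_2$ receive nothing. Your query also cannot be run in the stated direction here. No hands are raised at $x=L/2$, so the precondition (fewer than $k$ hands initially, at least $k$ finally) fails.

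The missing idea is to split the case $a\ge m/2$ further, as the paper does.
\begin{itemize}
\item If at least $m/2$ agents also weakly prefer $C\setminus W_0$, then at most $m/2$ agents strictly prefer $W_0$. Put all of them in $S_1$ and top up with indifferent agents. Only then is your claim about the remaining agents true.
\item Otherwise, fewer than $m/2$ agents weakly prefer $C\setminus W_0=W_{L/2}$. In this case you must run the \textsc{MoveLength} query in the opposite direction, from $x=L/2$ to $x=0$. Fewer than $m/2$ hands are raised initially and more than $m/2$ finally, so your first-stopping-time argument applies verbatim.
\end{itemize}
With this repair, your recurrences and the bounds of $m-1$ two-knife queries and $2m-2$ cuts go through as written.
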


\begin{proof}
We first describe how to split an instance $(C,S)$ with $m=2r$ agents into two equal subinstances using one \textsc{MoveLength} query.
Write $L=\length{C}$ and, for $0\le t\le L/2$, let $A_t=C\langle t,t+L/2\rangle$ and $B_t=C\setminus A_t$.
Thus, $A_t$ and $B_t$ both have length $L/2$.
The two knives move from the two boundaries of $A_0$ to those of $A_{L/2}$, and agent $i$ raises a hand when $v_i(A_t)\ge v_i(C)/2$.
The values $v_i(C)$, and hence the thresholds, can be obtained beforehand by evaluating the components of $C$.
These Robertson--Webb queries are not included in the two-knife query count of the theorem.

If at least $r$ agents weakly prefer each of $A_0$ and $B_0$, assign to $A_0$ every agent who strictly prefers it, add indifferent agents until the group has size $r$, and assign the remaining agents to $B_0$.
Otherwise, orient the motion so that fewer than $r$ hands are raised initially.
The portions between and outside the knives exchange roles at the end of the motion, where more than $r$ hands are therefore raised.
The query stops at the first $t^*$ at which at least $r$ hands are raised.
By continuity, fewer than $r$ agents strictly prefer $A_{t^*}$ to $B_{t^*}$; otherwise, at least $r$ hands would have been raised immediately before $t^*$.
Assign to $A_{t^*}$ any $r$ agents with raised hands, including all agents who strictly prefer it, and assign the remaining agents to $B_{t^*}$.
The first group values $A_{t^*}$ at least $v_i(C)/2$ by the stopping rule, while every agent in the second group weakly prefers $B_{t^*}$ and therefore values it at least $v_i(C)/2$.
This split uses one two-knife query and introduces at most two cut points.

We now recurse on instances $(C,S)$ satisfying $\length{C}=m/n$ and $v_i(C)\ge m/n$ for every $i\in S$, where $m=|S|$.
Initially $C=[0,1]$, $S$ is the full set of agents, and $m=n$.
If $m=1$, assign $C$ to its only agent.
Otherwise $m$ is even, and the split above produces two instances with $m/2$ agents.
Every agent assigned to a child values it at least
$v_i(C)/2\ge m/(2n)$, and each child has length $m/(2n)$.
Thus, the invariant is preserved.

The recursion is a full binary tree with $n$ leaves and hence $n-1$ internal nodes.
Each internal node uses one two-knife query and introduces at most two cut points.
The resulting bounds are therefore at most $n-1$ two-knife queries and $2n-2$ cuts.
Together with \Cref{prop:elementary-lower-bound}, this also proves that the cut bound is optimal.
\end{proof}

\subsection{General Numbers of Agents}

In this subsection, we show how to produce an equal-length proportional allocation for arbitrary $n$.
The procedure uses $O(n^5)$ Robertson--Webb queries, $O(n^2)$ two-knife queries, and $O(n\log n)$ cuts.
The procedure recursively assigns bundles to a nonempty group of agents and continues with the remaining agents and cake.
At each stage, one agent divides the current subcake into one equal-length bundle per remaining agent, with all bundles having the same value to that agent.
A matching argument then assigns a nonempty collection of these bundles to the same number of agents, giving every assigned agent value at least $1/n$.
The lemma below constructs the partition used in this matching argument.
In its application, the equal-value condition makes the chosen agent adjacent to every bundle.
The bound on new cut points yields the total bound of $O(n\log n)$ cuts.
The query bounds follow by summing the cost of the construction over the recursive stages.

\begin{lemma}\label{lem:equal-partition}
Fix an agent $i$, and let $C$ be a finite union of intervals satisfying $\length{C}=k\ell$ and $v_i(C)=k\alpha$ for an integer $k\ge1$ and some $\ell,\alpha>0$.
One can partition $C$ into bundles $T_1,\dots,T_k$ satisfying $\length{T_j}=\ell$ and $v_i(T_j)=\alpha$ for every $j$.
At most $4\lceil\log_2 k\rceil$ new cut points are introduced in obtaining any one bundle.
\end{lemma}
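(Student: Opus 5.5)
We argue by recursive splitting on $k$, in the spirit of balanced halving but driven by a \textsc{MoveValue}$_i$ query. The basic step takes $C$ with $\length{C}=k\ell$ and $v_i(C)=k\alpha$, where $k\ge2$. Put $a=\lfloor k/2\rfloor$ and $b=k-a$. The step produces a portion $P\subseteq C$ with $\length{P}=a\ell$ and $v_i(P)=a\alpha$, using at most four new cut points. The complement $C\setminus P$ then satisfies $\length{C\setminus P}=b\ell$ and $v_i(C\setminus P)=b\alpha$, and we recurse on both sides. The case $k=1$ needs no work: take $T_1=C$.

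To find $P$, work in the concatenated copy $[0,L]$ of $C$, where $L=k\ell$, using the notation $C\langle x,y\rangle$. Start with the left knife at $0$ and the right knife at the point $y_0$ with $v_i(C\langle 0,y_0\rangle)=a\alpha$. This point is found with $O(q)$ \textsc{Eval} queries and one \textsc{Cut} query. Now move the left knife monotonically toward $L-y_1$, where $y_1$ is chosen so that the final window $C\langle L-y_1,L\rangle$ is the suffix of value $a\alpha$. The right knife follows so that the window keeps value $a\alpha$.

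The window length varies continuously, from $\lambda_0=y_0$ at the prefix to $\lambda_1=y_1$ at the suffix. Because the prefix and suffix of value $a\alpha$ are disjoint when $a\le k/2$, or overlap harmlessly, the complement of the prefix has value $b\alpha$. The key averaging claim is that $a\ell$ lies between $\lambda_0$ and $\lambda_1$. If this fails, then, say, both the prefix and the suffix of value $a\alpha$ are longer than $a\ell$.

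To rule this out, consider the family of all windows of value $a\alpha$ sliding across $[0,L]$. When $a\mid k$, tile $[0,L]$ by consecutive windows of value $a\alpha$. These lengths sum to $L=k\ell$, so they average $a\ell$, and some consecutive pair straddles $a\ell$. Starting the query at one window of that pair and ending at the next, the length passes through $a\ell$ by the intermediate value theorem.

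When $a\nmid k$, this tiling argument breaks down. It is here that I expect the main obstacle. To avoid it, I would instead use a wrap-around window: view $C$ cyclically, so that a window may consist of a suffix together with a prefix. Windows of value $a\alpha$ on the circle, shifted successively by value $\alpha$, have lengths averaging $a\ell$ over $k$ shifts, because each point is covered exactly $a$ times.

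A cyclic window is an interval or a complement of an interval in the concatenated copy. Its boundaries add at most two points per side, so $P$ and $C\setminus P$ each gain at most four new boundary points. These are the two knife positions, plus possibly the wrap point, which is an endpoint of the concatenated copy and therefore already counted. The \textsc{MoveValue}$_i$ query stops exactly when the length first equals $a\ell$, and returns $P$. Non-uniqueness of the motion is irrelevant, since only the stopping condition is used.

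Finally, the recursion depth is $\lceil\log_2 k\rceil$, because each split yields parts of sizes $\lfloor k/2\rfloor$ and $\lceil k/2\rceil$. Each final bundle $T_j$ lies on one root-to-leaf path, and at each level at most four new cut points fall on its boundary. Hence at most $4\lceil\log_2 k\rceil$ new cut points are introduced for any one bundle. Equality of lengths and values follows from the invariant at the leaves, where $k=1$.
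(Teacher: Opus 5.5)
Your proof is correct, but it makes each split differently from the paper.

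\textbf{The paper's route.} The paper only ever extracts a fraction $1/t$. It tiles the current subcake $D$, with $\lambda(D)=t\ell'$ and $v_i(D)=t\alpha'$, into $t$ consecutive blocks of equal \emph{length} $\ell'$. It then finds two adjacent blocks whose values straddle $\alpha'$ and runs one \textsc{MoveValue}$_i$ query across them. The recursion therefore halves when $h$ is even. When $h$ is odd, it first peels off one final bundle (with $t=h$) and then halves the even remainder. Two splits per halving, each adding two cut points, is where the factor $4$ in $4\lceil\log_2 k\rceil$ comes from.

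\textbf{Your route.} You split directly as $\lfloor k/2\rfloor:\lceil k/2\rceil$. Your cyclic family of $k$ windows of value $a\alpha$, started at the points of cumulative value $j\alpha$, covers every arc exactly $a$ times. This is what makes an arbitrary ratio $a/k$ work when no linear tiling exists. A split adds only the two knife positions, since the wrap point is an endpoint of $C$. So your recursion of depth $\lceil\log_2 k\rceil$ actually gives $2\lceil\log_2 k\rceil$ new cut points per bundle. Your count of four per level is an overcount, though harmless for the stated bound.

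\textbf{What each buys.} Your route halves the paper's constant. The paper's route keeps every moving portion a genuine linear window $C\langle x,y\rangle$, so the query definition applies verbatim.

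\textbf{Two things to fix.}

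First, a wrapping window is not a portion between two knives in the sense of the paper's query. Add the observation that each step from $W_j$ to $W_{j+1}$ can be run entirely in one of two forms. Either the window is non-wrapping throughout, or you run the step on the complementary linear window, which has value $b\alpha$ and target length $b\ell$. Either way, a single \textsc{MoveValue}$_i$ query still suffices.

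Second, your ``key averaging claim'' that $a\ell$ lies between the prefix and suffix lengths is false. Take $k=3$ on $[0,1]$, with value $1/3$ spread uniformly over each of $[0,0.45]$, $[0.45,0.55]$ and $[0.55,1]$. Both the prefix and the suffix of value $1/3$ have length $0.45>1/3$. Your subsequent argument does not rule this out. Instead, it replaces the prefix-to-suffix motion with a motion between a straddling pair of consecutive windows. State it that way and drop the claim.
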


\begin{proof}
We first show how to extract from a current subcake $D$ a subpiece of specified length and value.
Suppose that $D$ is a union of $q_D$ intervals, $\length{D}=t\ell'$, and $v_i(D)=t\alpha'$ for an integer $t\ge2$.
Divide the copy of $D$ constructed above into $t$ consecutive blocks of length $\ell'$ and evaluate them using $O(tq_D)$ queries.
Their values have average $\alpha'$.
If one block has value $\alpha'$, it is the desired subpiece.
Otherwise, there are two consecutive blocks $D\langle a,b\rangle$ and $D\langle b,c\rangle$ whose values lie on opposite sides of $\alpha'$.
Suppose first that $v_i(D\langle a,b\rangle)<\alpha'<v_i(D\langle b,c\rangle)$.
Place the left knife at $a$ and the right knife at a position $y_0$ such that $v_i(D\langle a,y_0\rangle)=\alpha'$.
The first inequality implies $y_0>b$.
For every $x\in[a,b]$, we have $v_i(D\langle x,c\rangle)\ge v_i(D\langle b,c\rangle)>\alpha'$.
Thus, a position of the right knife preserving the value $\alpha'$ exists throughout the motion.
Move the left knife from $a$ to $b$ while agent $i$ moves the right knife continuously and monotonically to keep the intervening value equal to $\alpha'$.
Let $y_1$ be the position of the right knife when the left knife reaches $b$.
The second inequality implies $y_1<c$.
Hence, the length of the intervening portion starts above $\ell'$ and ends below $\ell'$.
If $v_i(D\langle a,b\rangle)>\alpha'>v_i(D\langle b,c\rangle)$, place the right knife at $b$ and the left knife at a position $x_0$ such that $v_i(D\langle x_0,b\rangle)=\alpha'$.
The first inequality implies $x_0>a$.
For every $y\in[b,c]$, we have $v_i(D\langle a,y\rangle)\ge v_i(D\langle a,b\rangle)>\alpha'$.
Thus, a position of the left knife preserving the value $\alpha'$ exists throughout the motion.
Move the right knife from $b$ to $c$ while agent $i$ moves the left knife continuously and monotonically to keep the intervening value equal to $\alpha'$.
Let $x_1$ be the position of the left knife when the right knife reaches $c$.
The second inequality implies $x_1<b$.
Hence, the length starts below $\ell'$ and ends above $\ell'$.
In either case, continuity implies that a \textsc{MoveValue}$_i$ query returns a subpiece of length $\ell'$ and value $\alpha'$.
Thus, the extraction uses $O(tq_D)$ Robertson--Webb queries and at most one two-knife query.

For a current subcake $D$, let $h$ be the integer satisfying $\length{D}=h\ell$ and $v_i(D)=h\alpha$.
If $h=1$, then $D$ is a final bundle.
If $h$ is even, use the extraction above with $t=2$, target length $h\ell/2$, and target value $h\alpha/2$.
The returned subpiece and its complement both have length $h\ell/2$ and value $h\alpha/2$.

If $h\ge3$ is odd, first use the extraction with $t=h$ to obtain one final bundle of length $\ell$ and value $\alpha$.
The residual subcake has length $(h-1)\ell$ and value $(h-1)\alpha$.
Apply the even step to split it into two subcakes, each of length $(h-1)\ell/2$ and value $(h-1)\alpha/2$.
Continue recursively until $h=1$.
Every step replaces an integer $h>1$ by smaller positive integers whose sum is $h$.
Thus, the procedure terminates with the required $k$ bundles.
If $h$ is even, it is halved after one split.
If $h$ is odd, it is reduced to $(h-1)/2$ after two splits.
Thus, obtaining any one bundle requires at most $2\lceil\log_2 k\rceil$ splits.
Each split introduces at most two new cut points, proving the last assertion.
\end{proof}

The construction also gives the complexity bounds used below.
Exactly $k-1$ splits are made, each using at most one \textnormal{\textsc{MoveValue}}$_i$ query.
Consequently, if $C$ is a union of $q$ intervals, each bundle is a union of $O(q+\log k)$ intervals.

Using this partition at each stage gives the following result.

\begin{theorem}\label{thm:rw-upper}
For every nonatomic valuation profile, there is a procedure in the Robertson--Webb model augmented with two-knife queries that produces an equal-length proportional allocation.
It uses $O(n^5)$ Robertson--Webb queries and $O(n^2)$ two-knife queries.
The final allocation has $O(n\log n)$ cuts.
\end{theorem}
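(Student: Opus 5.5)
The procedure is recursive on the set $S$ of remaining agents and the remaining subcake $D$. It maintains the invariant $\length{D}=|S|/n$ and $v_j(D)\ge |S|/n$ for every $j\in S$; initially $D=[0,1]$ and $S$ is the full set. At each stage with $k=|S|\ge2$, we pick any agent $i\in S$ and apply \Cref{lem:equal-partition} with $\ell=1/n$ and $\alpha=v_i(D)/k\ge 1/n$. This splits $D$ into bundles $T_1,\dots,T_k$ of length $1/n$, each worth $\alpha\ge1/n$ to $i$.

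Next, build the bipartite graph between $S$ and $\{T_1,\dots,T_k\}$ in which $j$ is adjacent to $T_m$ whenever $v_j(T_m)\ge 1/n$. Agent $i$ is adjacent to every bundle. The key combinatorial step is a standard Hall-type statement: there is a nonempty set $X\subseteq S$ and a perfect matching of $X$ into bundles $\mathcal T_X$ such that no agent of $S\setminus X$ is adjacent to any bundle of $\mathcal T_X$.

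To obtain it, take a maximum matching and apply the König--Hall decomposition. If $S$ itself has a perfect matching, set $X=S$. Otherwise, let $Y\subseteq S$ be a minimal-violating (maximal surplus) set of agents with $|N(Y)|<|Y|$, and take $X=S\setminus Y$ together with the bundles outside $N(Y)$. Hall's condition holds for $X$ relative to those bundles by the choice of $Y$, and $i\notin Y$ since $N(i)$ is everything. So $X$ is nonempty and agents of $Y$ are adjacent only to bundles in $N(Y)$.

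Assign the matched bundles to $X$; each such agent gets length $1/n$ and value at least $1/n$. Set $S'=S\setminus X$ and $D'=D\setminus\bigcup\mathcal T_X$. Then $\length{D'}=|S'|/n$. Each $j\in S'$ values every removed bundle below $1/n$, so $v_j(D')> |S'|/n$, and the invariant holds. When $k=1$, give $D$ to the last agent.

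For the counts, there are at most $n$ stages since each removes at least one agent. Stage $k$ uses $k-1$ \textsc{MoveValue} queries by the remark after \Cref{lem:equal-partition}, giving $O(n^2)$ two-knife queries in total. The Robertson--Webb queries come from the extraction steps, $O(tq_D)$ each, and from evaluating all $k^2$ agent-bundle pairs, each bundle having $O(q)$ components. The component count of $D$ is controlled by the cut bound below, which is $O(n\log n)$. Summing over stages then gives a polynomial bound, and tallying carefully yields $O(n^5)$.

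\textbf{The main obstacle} is the cut bound $O(n\log n)$, because a naive sum gives $O(n^2\log n)$. The approach is to charge cuts to the bundles that are finally assigned rather than to all bundles produced. By the lemma, producing any single bundle introduces at most $4\lceil\log_2 k\rceil$ new cut points. Cut points created in one stage but lying only between bundles that all return to $D'$ still bound pieces of $D'$, so they must be accounted for in later stages.

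The intended fix is to observe that every cut point lying in the final allocation is a boundary of some assigned bundle at the moment that bundle is assigned. Such a point is either one of the at most $4\lceil\log_2 n\rceil$ points introduced when constructing that bundle in its stage, or a boundary already present in $D$. Points of the second kind are inherited and can be charged to the earlier stage in which they were created. The difficulty is that cut points created for bundles returned to $D'$ may later merge away or persist, and each persisting point must be charged to some bundle.

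I would first try the following accounting. A boundary point of $D$ persists only if it ends up adjacent to some assigned bundle, and at most $O(\log n)$ new points border each assigned bundle at creation. Points created inside returned bundles either disappear by merging, because adjacent returned pieces concatenate without a cut, or are re-charged to the bundle eventually assigned next to them. Showing this recharging is injective up to a constant factor gives $n\cdot O(\log n)$ cuts.
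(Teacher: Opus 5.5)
The genuine gap is the $O(n\log n)$ cut bound. You correctly identify it as the crux but leave it unproved: the proposal ends with a ``recharging'' scheme whose injectivity you say remains to be shown. Your Robertson--Webb bound also rests on it, since you control component counts via the cut bound and then only ``tally carefully.''

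The rest is sound. Your recursion, invariant, use of Lemma~\ref{lem:equal-partition}, and two-knife count match the paper. Your Hall step is a valid dual of the paper's, which takes an inclusion-minimal bundle set $\mathcal T'$ with $|N(\mathcal T')|\le|\mathcal T'|$ and assigns it to $N(\mathcal T')$. For your version to work, $Y$ must maximize the deficiency $|Y|-|N(Y)|$; an inclusion-minimal violating set does not give Hall's condition on $S\setminus Y$. Also, your matching saturates $X$ but may leave bundles outside $N(Y)$ unmatched. Those bundles go back into $D'$, which is harmless.

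The missing observation is that your worry about points ``persisting'' is unfounded. A point created at some stage that separates only bundles returned to $D'$ is not a boundary of $D'$ at all. The next stage applies Lemma~\ref{lem:equal-partition} to the concatenated copy of $D'$, so the point is simply forgotten. It reappears only if a later split cuts there, and then it counts as a new point of that later stage. Consequently, every interior boundary point of the residual subcake at a stage borders a bundle assigned at an earlier stage.

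The argument then closes as follows. Charge each final cut point $c$ to the earliest-assigned bundle $T$ whose boundary contains $c$. At the stage where $T$ is assigned, $c$ cannot be a boundary of the current subcake, since it would then border an even earlier assigned bundle. So $c$ is one of the at most $4\lceil\log_2 n\rceil$ points introduced in obtaining $T$. This charge is injective and gives at most $4n\lceil\log_2 n\rceil$ cuts, with no recharging of points inside returned bundles needed.

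For the query count, the paper uses a bound independent of the cut count. A stage with $m$ agents makes $m-1$ splits, each creating at most two temporary points, so at most $n^2$ points are ever created. Hence every subcake has $O(n^2)$ components, each extraction costs $O(n^3)$ queries, and each stage costs $O(n^4)$ including the agent--bundle evaluations. Summing over at most $n$ stages gives $O(n^5)$.
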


\begin{proof}
We recurse on subinstances $(C,S)$, where $S$ is a set of $m$ agents and $C$ is a finite union of intervals, under the invariant $\length{C}=m/n$ and $v_i(C)\ge m/n$ for every $i\in S$.
The initial instance satisfies this invariant.

Choose an agent $a\in S$ and use \textnormal{\textsc{Eval}} queries on the components of $C$ to obtain $V=v_a(C)$.
Apply \Cref{lem:equal-partition} with $k=m$, $\ell=1/n$, and $\alpha=V/m$ to partition $C$ into bundles $T_1,\dots,T_m$ satisfying $\length{T_j}=1/n$ and $v_a(T_j)=V/m$ for every $j$.
Since $V\ge m/n$, agent $a$ values every $T_j$ at least $1/n$.

Form a bipartite graph between $S$ and $\mathcal T=\{T_1,\dots,T_m\}$, joining $i$ to $T_j$ when $v_i(T_j)\ge1/n$.
Because the bundles partition $C$ and $v_i(C)\ge m/n$, every agent has a neighbor.
Agent $a$ is adjacent to every bundle.
The graph is obtained by evaluating every bundle for every agent.

Choose a nonempty $\mathcal T'\subseteq\mathcal T$ inclusion-minimal subject to $|N(\mathcal T')|\le|\mathcal T'|$, and put $S'=N(\mathcal T')$.
Such a set exists because $\mathcal T$ itself qualifies, and agent $a$ belongs to $S'$.
We first show that $|S'|=|\mathcal T'|$.
If $|S'|<|\mathcal T'|$, then $|\mathcal T'|\ge2$ because $a\in S'$, and for any $T_j\in\mathcal T'$ we have $|N(\mathcal T'\setminus\{T_j\})|\le|S'|\le|\mathcal T'|-1$, contradicting the minimality of $\mathcal T'$.
Moreover, every nonempty proper subset $U\subset\mathcal T'$ satisfies $|N(U)|>|U|$ by minimality.
Together with $|N(\mathcal T')|=|\mathcal T'|$, this is Hall's condition for $G[S',\mathcal T']$.
Thus, the graph has a perfect matching.
By the definition of $S'$, no agent outside $S'$ is adjacent to a bundle in $\mathcal T'$.
Assign the bundles in $\mathcal T'$ to the agents in $S'$ according to that perfect matching, and recurse on $S\setminus S'$ and $C\setminus\bigcup_{T_j\in\mathcal T'}T_j$.
Every remaining agent values each removed bundle below $1/n$.
Since $|S'|=|\mathcal T'|$, every remaining agent values the residual subcake at least $(m-|S'|)/n$.
Thus, the invariant is preserved.
Because $S'$ is nonempty, the recursion terminates with a valid allocation.

There are at most $n$ recursive stages.
At a stage with $m$ agents, the construction in the proof of \Cref{lem:equal-partition} performs $m-1$ splits.
Consequently, it uses at most $m-1$ two-knife queries and creates $O(m)$ temporary boundary points.
Over all stages, the procedure uses $O(n^2)$ two-knife queries and creates $O(n^2)$ temporary boundary points.
Every subcake encountered therefore has $O(n^2)$ components.
At one stage, the construction performs at most $n-1$ extractions.
Each extraction considers at most $n$ consecutive blocks of a subcake with $O(n^2)$ components and therefore uses $O(n^3)$ Robertson--Webb queries, as shown in the proof of \Cref{lem:equal-partition}.
Thus, constructing the partition at one stage uses $O(n^4)$ Robertson--Webb queries.
Evaluating its at most $n$ bundles for at most $n$ agents also uses $O(n^4)$ queries, because each bundle has $O(n^2)$ components.
The total over at most $n$ stages is therefore $O(n^5)$ Robertson--Webb queries.

At each stage, keep only the boundaries created while obtaining the bundles assigned at that stage.
Any other boundary separates only bundles left for later stages and is internal to the residual subcake.
Thus, it can be discarded.
Boundaries kept from earlier stages have already been charged to bundles assigned at those stages.
By \Cref{lem:equal-partition}, each assigned bundle requires $O(\log n)$ new cut points, and exactly $n$ bundles are assigned over the whole recursion.
Hence, the final allocation uses $O(n\log n)$ cuts.
\end{proof}
\section{Conclusion}\label{sec:conclusion}

We proved that equal-length proportionality has exact cut complexity $2n-2$ for every $n$.
For the stronger equal-length envy-free problem, the same bound is optimal when $n$ is a prime power, and the general upper bound is $O(n^{1.525})$.

No deterministic protocol in the Robertson--Webb model terminates after finitely many queries on every instance and always returns an exact equal-length proportional allocation.
For every $\varepsilon>0$, however, an equal-length $\varepsilon$-proportional allocation with at most $2n-2$ cuts can be computed using $O(n^2/\varepsilon)$ queries.
A polynomial-time alternative in the real-RAM oracle model uses $O(n^3/\varepsilon)$ queries and $O(n^2/\varepsilon)$ cuts.
In the Robertson--Webb model augmented with two-knife queries, the optimal cut bound is attained for powers of two under arbitrary nonatomic valuations.
For general $n$ and arbitrary nonatomic valuations, the augmented model produces an equal-length proportional allocation using $O(n^5)$ Robertson--Webb queries, $O(n^2)$ two-knife queries, and $O(n\log n)$ cuts.
Two questions remain open: whether equal-length envy-freeness always admits $2n-2$ cuts, and whether a protocol in this augmented model attains $2n-2$ cuts for every $n$.

\section*{Acknowledgement}
This work was partially supported by JST ERATO Grant Number JPMJER2301, JST LOTUS Programme Grant Number JPMJ25152681, JSPS KAKENHI Grant Number JP25K00137, and ANRF MATRICS Grant Number MTR/2021/000474.

\section*{Declaration of competing interest}
The authors declare that they have no known competing financial interests or personal relationships that could have appeared to influence the work reported in this paper.

\section*{Data availability}
No data was used for the research described in the article.
\section*{Declaration of AI assistance}
The authors used OpenAI's ChatGPT and Codex to assist with language editing, manuscript organization, and the presentation of proofs and related work. The authors reviewed and verified all mathematical content and citations. The authors assume responsibility for all content.

\bibliographystyle{abbrvnat} 
\bibliography{references}

@article{Austin1982,
  author  = {A. K. Austin},
  title   = {Sharing a Cake},
  journal = {The Mathematical Gazette},
  volume  = {66},
  number  = {437},
  pages   = {212--215},
  year    = {1982},
  doi     = {10.2307/3616548}
}

@inproceedings{procaccia2017lower,
  title={A lower bound for equitable cake cutting},
  author={Procaccia, Ariel D and Wang, Junxing},
  booktitle={Proceedings of the 2017 ACM Conference on Economics and Computation},
  pages={479--495},
  year={2017},
  doi={10.1145/3033274.3085107}
}

@InProceedings{Kawase2025SimultaneouslyFA,
  author    = {Kawase, Yasushi and Roy, Bodhayan and Sanpui, Mohammad Azharuddin},
  title     = {{Simultaneously Fair Allocation of Indivisible Items Across Multiple Dimensions}},
  booktitle = {45th IARCS Annual Conference on Foundations of Software Technology and Theoretical Computer Science (FSTTCS 2025)},
  pages     = {41:1--41:19},
  year      = {2025},
  doi       = {10.4230/LIPIcs.FSTTCS.2025.41}
}

@article{golz2026fair,
  title={Fair division among couples and small groups},
  author={G{\"o}lz, Paul and Yaghoubizade, Hannane},
  journal={Proceedings of the AAAI Conference on Artificial Intelligence},
  volume={40},
  number={20},
  pages={16963--16970},
  year={2026},
  doi={10.1609/aaai.v40i20.38743}
}

@article{la2026bad,
  title={Bad News for Couples: Tight Lower Bounds for Fair Division of Indivisible Items},
  author={la Tour, Max Dupr{\'e}},
  journal={arXiv preprint arXiv:2601.01012},
  year={2026},
  doi={10.48550/arXiv.2601.01012}
}

@article{cseh2020complexity,
  title={The complexity of cake cutting with unequal shares},
  author={Cseh, {\'A}gnes and Fleiner, Tam{\'a}s},
  journal={ACM Transactions on Algorithms},
  volume={16},
  number={3},
  pages={1--21},
  year={2020},
  doi={10.1145/3380742}
}

@inproceedings{aziz2016discrete,
  title={A discrete and bounded envy-free cake cutting protocol for any number of agents},
  author={Aziz, Haris and Mackenzie, Simon},
  booktitle={2016 IEEE 57th Annual Symposium on Foundations of Computer Science (FOCS)},
  pages={416--427},
  year={2016},
  doi={10.1109/FOCS.2016.52}
}

@article{Pikhurko2000OnEC,
  title={On Envy-Free Cake Division},
  author={Oleg Pikhurko},
  journal={The American Mathematical Monthly},
  year={2000},
  volume={107},
  number={8},
  pages={736--738},
  doi={10.2307/2695471}
}

@article{Su1999RentalHS,
  title={Rental Harmony: Sperner's Lemma in Fair Division},
  author={Francis Edward Su},
  journal={The American Mathematical Monthly},
  year={1999},
  volume={106},
  number={10},
  pages={930--942},
  doi={10.2307/2589747}
}

@inproceedings{Bei2012OptimalPC,
  author    = {Bei, Xiaohui and Chen, Ning and Hua, Xia and Tao, Biaoshuai and Yang, Endong},
  title     = {Optimal Proportional Cake Cutting with Connected Pieces},
  booktitle = {Proceedings of the Twenty-Sixth AAAI Conference on Artificial Intelligence (AAAI 2012)},
  pages     = {1263--1269},
  year      = {2012},
  doi       = {10.1609/aaai.v26i1.8243}
}

@article{Amanatidis2023FairDO,
  title={Fair division of indivisible goods: Recent progress and open questions},
  author={Georgios Amanatidis and Haris Aziz and Georgios Birmpas and Aris Filos-Ratsikas and Bo Li and Herv{\'e} Moulin and Alexandros A. Voudouris and Xiaowei Wu},
  journal={Artificial Intelligence},
  year={2023},
  volume={322},
  pages={103965},
  doi={10.1016/j.artint.2023.103965}
}

@incollection{Bouveret2016FairAO,
  author    = {Bouveret, Sylvain and Chevaleyre, Yann and Maudet, Nicolas},
  title     = {Fair Allocation of Indivisible Goods},
  booktitle = {Handbook of Computational Social Choice},
  editor    = {Brandt, Felix and Conitzer, Vincent and Endriss, Ulle and Lang, J{\'e}r{\^o}me and Procaccia, Ariel D.},
  pages     = {284--310},
  publisher = {Cambridge University Press},
  year      = {2016},
  doi       = {10.1017/CBO9781107446984.013}
}

@article{Steinhaus1949SurLD,
  title={Sur la division pragmatique},
  author={Hugo Steinhaus},
  journal={Econometrica},
  year={1949},
  volume={17},
  pages={315--319},
  doi={10.2307/1907319}
}

@inproceedings{bu2023fair,
  author    = {Bu, Xiaolin and Li, Zihao and Liu, Shengxin and Song, Jiaxin and Tao, Biaoshuai},
  title     = {Fair Division with Allocator's Preference},
  booktitle = {Proceedings of the 19th Conference on Web and Internet Economics (WINE 2023)},
  pages     = {77--94},
  year      = {2023},
  doi       = {10.1007/978-3-031-48974-7_5}
}

@article{stromquistwoodall1985,
  author  = {Stromquist, Walter and Woodall, D. R.},
  title   = {Sets on which several measures agree},
  journal = {Journal of Mathematical Analysis and Applications},
  volume  = {108},
  number  = {1},
  pages   = {241--248},
  year    = {1985},
  doi     = {10.1016/0022-247X(85)90021-6}
}

@InProceedings{goldberg_li_itcs2023,
  author    = {Goldberg, Paul and Li, Jiawei},
  title     = {{Consensus Division in an Arbitrary Ratio}},
  booktitle = {14th Innovations in Theoretical Computer Science Conference (ITCS 2023)},
  pages     = {57:1--57:18},
  year      = {2023},
  doi       = {10.4230/LIPIcs.ITCS.2023.57}
}

@article{jojic_panina_zivaljevic_2021,
  author  = {Joji{\'c}, Du{\v{s}}ko and Panina, Gaiane and {\v{Z}}ivaljevi{\'c}, Rade T.},
  title   = {Splitting Necklaces, with Constraints},
  journal = {SIAM Journal on Discrete Mathematics},
  volume  = {35},
  number  = {2},
  pages   = {1268--1286},
  year    = {2021},
  doi     = {10.1137/20M1331949}
}

@article{segal_halevi_2019,
  title={Cake-cutting with different entitlements: How many cuts are needed?},
  author={Segal-Halevi, Erel},
  journal={Journal of Mathematical Analysis and Applications},
  volume={480},
  number={1},
  pages={123382},
  year={2019},
  doi={10.1016/j.jmaa.2019.123382}
}

@article{crew_narayanan_spirkl_2020,
  author  = {Crew, Tom and Narayanan, Bhargav and Spirkl, Sophie},
  title   = {Disproportionate Division},
  journal = {Bulletin of the London Mathematical Society},
  volume  = {52},
  number  = {5},
  pages   = {885--890},
  year    = {2020},
  doi     = {10.1112/blms.12368}
}

@article{barman_ebadian_latifian_shah_2025,
  author  = {Barman, Siddharth and Ebadian, Soroush and Latifian, Mohamad and Shah, Nisarg},
  title   = {Fair Division with Market Values},
  journal = {Proceedings of the AAAI Conference on Artificial Intelligence},
  volume  = {39},
  number  = {13},
  pages   = {13589--13596},
  year    = {2025},
  doi     = {10.1609/aaai.v39i13.33484}
}

@inproceedings{biswas_barman_2018,
  author    = {Biswas, Arpita and Barman, Siddharth},
  title     = {Fair Division Under Cardinality Constraints},
  booktitle = {Proceedings of the Twenty-Seventh International Joint Conference on Artificial Intelligence (IJCAI 2018)},
  pages     = {91--97},
  year      = {2018},
  doi       = {10.24963/ijcai.2018/13}
}

@article{kawase_mahara_2026,
  title={Fair and Efficient Balanced Allocation for Indivisible Goods},
  author={Kawase, Yasushi and Mahara, Ryoga},
  journal={Proceedings of the AAAI Conference on Artificial Intelligence},
  volume={40},
  number={20},
  pages={17067--17075},
  year={2026},
  doi={10.1609/aaai.v40i20.38755}
}

@article{baker_harman_pintz_2001,
  author  = {Baker, R. C. and Harman, G. and Pintz, J.},
  title   = {The Difference Between Consecutive Primes, {II}},
  journal = {Proceedings of the London Mathematical Society},
  volume  = {83},
  number  = {3},
  pages   = {532--562},
  year    = {2001},
  doi     = {10.1112/plms/83.3.532}
}

@article{dudek_2015,
  author  = {Dudek, Adrian W.},
  title   = {On the {Riemann} Hypothesis and the Difference Between Primes},
  journal = {International Journal of Number Theory},
  volume  = {11},
  number  = {3},
  pages   = {771--778},
  year    = {2015},
  doi     = {10.1142/S1793042115500426}
}

@article{cramer_1936,
  author  = {Cram{\'e}r, Harald},
  title   = {On the Order of Magnitude of the Difference Between Consecutive Prime Numbers},
  journal = {Acta Arithmetica},
  volume  = {2},
  pages   = {23--46},
  year    = {1936}
}

@article{alon_1987,
  author  = {Alon, Noga},
  title   = {Splitting Necklaces},
  journal = {Advances in Mathematics},
  volume  = {63},
  number  = {3},
  pages   = {247--253},
  year    = {1987},
  doi     = {10.1016/0001-8708(87)90055-7}
}

@article{robertson_webb_1997,
  author  = {Robertson, Jack M. and Webb, William A.},
  title   = {Near Exact and Envy Free Cake Division},
  journal = {Ars Combinatoria},
  volume  = {45},
  pages   = {97--108},
  year    = {1997}
}

@article{edmonds_pruhs_2011,
  author  = {Edmonds, Jeff and Pruhs, Kirk},
  title   = {Cake Cutting Really Is Not a Piece of Cake},
  journal = {ACM Transactions on Algorithms},
  volume  = {7},
  number  = {4},
  pages   = {51:1--51:12},
  year    = {2011},
  doi     = {10.1145/2000807.2000819}
}

@misc{arndt_moseley_im_pruhs_2026,
  author        = {Arndt, Stephen and Moseley, Benjamin and Im, Sungjin and Pruhs, Kirk},
  title         = {No, Cake Cutting Really Is a Piece of Cake},
  year          = {2026},
  eprint        = {2606.07238},
  archiveprefix = {arXiv},
  primaryclass  = {cs.DS},
  note          = {arXiv:2606.07238}
}

@inproceedings{kawase_sanpui_2026,
  author    = {Kawase, Yasushi and Sanpui, Mohammad Azharuddin},
  title     = {Exact Cut Complexity of Equal-Length Proportional Cake Cutting},
  booktitle = {51st International Symposium on Mathematical Foundations of Computer Science (MFCS 2026)},
  series    = {Leibniz International Proceedings in Informatics (LIPIcs)},
  volume    = {386},
  pages     = {34:1--34:16},
  publisher = {Schloss Dagstuhl--Leibniz-Zentrum f{\"u}r Informatik},
  year      = {2026},
  doi       = {10.4230/LIPIcs.MFCS.2026.34}
}

\end{document}